\newtheorem{theorem}{Theorem}
\newtheorem{lemma}{Lemma}
\begin{document}

\begin{spacing}{1.36}

\title{Achievable Rate {\color{black}Analysis} and Phase Shift Optimization on Intelligent Reflecting Surface with Hardware Impairments}

\author{Zhe Xing,~\IEEEmembership{Student Member,~IEEE,}
        Rui Wang,~\IEEEmembership{Senior Member,~IEEE,}
        Jun Wu,~\IEEEmembership{Senior Member,~IEEE,}
        and Erwu Liu,~\IEEEmembership{Senior Member,~IEEE}
\thanks{The authors are with the College of Electronics and Information Engineering, Tongji University, Shanghai 201804, China (e-mail: zxing@tongji.edu.cn; ruiwang@tongji.edu.cn; wujun@tongji.edu.cn; erwuliu@tongji.edu.cn).}
}

\markboth{}%
{\MakeLowercase{\textit{et al.}}: Achievable Rate Analysis and Phase Shift Optimization on Intelligent Reflecting Surface with Hardware Impairments}

\maketitle

\begin{abstract}
Intelligent reflecting surface (IRS) is envisioned as a promising hardware solution to hardware cost and energy consumption in the fifth-generation (5G) mobile communication network. It exhibits great advantages in enhancing data transmission, but may suffer from performance degradation caused by inherent hardware impairment (HWI). For analysing the achievable rate (ACR) and optimizing the phase shifts in the IRS-aided wireless communication system with HWI, we consider that the HWI appears at both the IRS and the signal transceivers. {\color{black}On this foundation}, first, we derive the closed-form expression of the average ACR {\color{black} and the IRS utility}. Then, we formulate optimization problems to optimize the IRS phase shifts by maximizing the signal-to-noise ratio (SNR) at the receiver side, and obtain the solution by transforming non-convex problems into semidefinite programming (SDP) problems. Subsequently, we compare the IRS with the conventional decode-and-forward (DF) relay {\color{black} in terms of the ACR and the utility}. Finally, we carry out simulations to verify the theoretical {\color{black}analysis, and evaluate the impact of the channel estimation errors and residual phase noises on the optimization performance}. Our results reveal that {\color{black} the HWI reduces the ACR and the IRS utility, and begets more serious performance degradation with more reflecting elements.} Although the HWI has an impact on the IRS, it still leaves opportunities for the IRS to surpass the conventional DF relay, when the number of reflecting elements is large enough {\color{black} or the transmitting power is sufficiently high}.
\end{abstract}

\begin{IEEEkeywords}
Intelligent reflecting surface (IRS), hardware impairment (HWI), achievable rate (ACR), phase shift optimization, decode-and-forward (DF) relay.
\end{IEEEkeywords}

\IEEEpeerreviewmaketitle

\newpage
\section{Introduction}
\IEEEPARstart{T}{he} rapid development of the worldwide mobile communication technologies has been witnessed in recent years. After the 4th generation (4G) mobile communications became universal around the world, the initial 5th generation (5G) standard was completed in 2018 and the 5G commercial networks were already employed in part in the first quarter of 2020. For supporting huge mobile data traffic and high-speed communications required by a growing number of {\color{black}the} mobile devices accessed to the wireless networks, a variety of innovative techniques including millimeter wave (mmWave), ultra-dense network (UDN) and massive multiple-input multiple-output (MIMO) are implemented in 5G wireless transmission systems \cite{M.Agiwal2016(CST)}. These techniques exhibit great advantages in helping the communication systems improve spectral efficiency (SE) \cite{W.Yan2019(WCL)}, but {\color{black}face} challenging problems such as: 1) the mmWave is susceptible to blockage and suffers from severe power attenuation during the long-distance propagation in the atmosphere \cite{X.Lin2015(WCL)}, so that the wireless communication system will bear poor reliability when the received signals are substantially weak; 2) the UDN is composed of numerous intensively distributed base stations (BS) \cite{W.Sun2018(TWC)} while the massive MIMO requests the signal transceivers to be equipped with large-scale antenna arrays \cite{S.Hu2018(TSP)}, which {\color{black}lead} to high hardware cost (HWC).

One mature technological solution to these problems is utilizing relays to establish a multi-hop transmission mode. Conventional wireless cooperative communication systems mostly employ relays \cite{S.Cheng2018(TVT), P.K.Sharma2016(CL), X.Xia2015(TSP), A.K.Mishra2018(TVT)} to process on the signals received halfway and retransmit the signals to the destination terminals actively through an uncontrollable propagation environment. Relays are validated to be effective on improving system reliability \cite{P.K.Sharma2016(CL)}, but are still active retransmitting facilities that require high energy consumption (EC) and HWC. Recently, another state of the art approach, which is named Intelligent Reflecting Surface (IRS) \cite{O.Ozdogan2019(WCL), Q.Wu2019(ICASSP)}, Large Intelligent Surface (LIS) \cite{W.Yan2019(WCL)} or Large Intelligent Metasurface (LIM) \cite{Z.He2020(WCL)}, has attracted {\color{black}considerable attention} from wireless communication researchers. An IRS is a planar array composed of a large number of low-cost passive reconfigurable reflecting elements, each of which induces an adjustable phase shift on the coming signal wave and reflects the signal to the destination terminal \cite{Q.Wu2019(TWC), C.Huang2019(JSAC), C.Huang2020(WCM), Liuyiming(arXiv)}. It is {\color{black}distinct} from the ordinary physical reflecting surfaces which simply reflect the signal waves without any parameter adjustment, and also different from the traditional relays which actively retransmit the received signals. As a passive reflecting apparatus, the IRS is envisioned as a promising hardware solution to EC and HWC in the future communication networks.

There have already been studies that focused on the achievable rate (ACR) maximization, energy efficiency improvement, modulation scheme, secure communication realization, phase shift optimization, channel estimation and capacity {\color{black}analysis for} the IRS-aided wireless communication systems \cite{E.Basar2020(TCOM), M.Cui2019(WCL), H.Shen2019(CL), W.Yan2019(WCL), Q.Nadeem2020(arXiv), E.Bjornson2020(WCL), C.Huang2018(ICASSP), C.Huang2019(TWC)}. For instance, C. Huang, \textit{et al}. \cite{C.Huang2018(ICASSP), C.Huang2019(TWC)} employed the IRS to maximize the ACR \cite{C.Huang2018(ICASSP)} and the energy efficiency \cite{C.Huang2019(TWC)} of the wireless communication systems. E. Basar \cite{E.Basar2020(TCOM)} proposed an IRS-based index modulation scheme which enabled high data rate and low bit-error-rate (BER). M. Cui, \textit{et al}. \cite{M.Cui2019(WCL)} and H. Shen, \textit{et al}. \cite{H.Shen2019(CL)} developed IRS-aided secure wireless communication systems where the IRS was employed to maximize the rate gap (secrecy rate) between the desired transmission path from the source to the legitimate user and the undesired one from the source to the eavesdropper. W. Yan, \textit{et al}. \cite{W.Yan2019(WCL)} developed a passive beamforming and information transferring method and optimized the phase shifts with different state values to improve the average signal-to-noise ratio (SNR). Q. Nadeem, \textit{et al}. \cite{Q.Nadeem2020(arXiv)} outlined an IRS-aided multiple-user MIMO communication system and estimated the cascaded channel matrix within each time interval. E. Björnson, \textit{et al}. \cite{E.Bjornson2020(WCL)} analysed and compared the channel capacities of the IRS-supported, the decode-and-forward (DF) relay assisted and the single-input-single-output (SISO) communication systems, and derived the least required number of {\color{black}the} IRS reflecting elements which allowed {\color{black}the} IRS to outperform the DF relay and SISO. 

It is noted that the aforementioned works are carried out under the assumption of perfect hardware. However, in most practical situations, the inherent hardware impairment (HWI), e.g. phase noise, quantization error, amplifier non-linearity, \textit{et al}., which {\color{black}will} generally limit the system performance, cannot be neglected due to the non-ideality of the communication devices in the real world \cite{X.Zhang2015(TCOM), K.Xu2015(IEEE PACRIM), X.Xia2015(TSP)}. Although the effect of the HWI on the system performance can be mitigated by compensation algorithms \cite{T.Schenk2008(book)}, there will still exist residual HWI due to the imprecisely estimated time-variant hardware characteristic and the random noise. As a result, it is of great significance to probe into the system performance in the presence of HWI. Some researchers \cite{E.Bjornson2014(TIT), Q.Zhang2018(TWC), E.Bjornson2013(CL)} analysed the channel capacity of the massive MIMO communication systems with HWI, which they modelled as additive Gaussian distributed distortion noise. But to the best of our knowledge, there were only a few studies that analysed the IRS-aided communication systems with the HWI at the IRS \cite{S.Hu2018(GlobalCom), M.A.Badiu2020(WCL),D.Li2020(CL)}. Among these studies, the researchers performed some important initial works by modelling {\color{black}the HWI at the IRS} as an additive variable {\color{black}with respect to} the distance between {\color{black}the} reflecting point and the reflecting surface center \cite{S.Hu2018(GlobalCom)}, or as the uniformly distributed phase noise generated by the reflecting units \cite{M.A.Badiu2020(WCL),D.Li2020(CL)}. However, {\color{black} these works still left several research gaps to be filled. First,} the HWI of the transmitting devices and receiving terminals was not simultaneously taken into consideration, which would jointly influence the performance of the IRS-aided communication systems as well. {\color{black} Second, the phase shift optimization was not implemented when there existed HWI, which was indispensable for one to acquire the optimal IRS configuration with hardware imperfections. Third, the performance comparisons between the IRS and the conventional approaches, e.g. DF relay, which also contributed to the wireless data transmission enhancement, needed to be further explored in the presence of HWI.} Up to now, we have not found the {\color{black} related works that inquired into the above three aspects}. Therefore, in this article, we will provide the ACR {\color{black}analysis} and phase shift optimization on the IRS-aided communication system in consideration of {\color{black}the HWI at both the IRS and transceivers}, and present performance comparisons with the existing multiple-antenna DF relay assisted {\color{black}communication} system with the HWI at the DF relay and transceivers. Our contributions are summarized as follows.

\begin{itemize}
\item[•] By referring to \cite{M.A.Badiu2020(WCL)}, we model the {\color{black}HWI at the IRS} as a phase error matrix, in which the random phase errors generated by the IRS reflecting units are uniformly distributed. By referring to \cite{E.Bjornson2014(TIT),E.Bjornson2015(TWC)}, {\color{black}we model the transceiver HWI as the additive distortion noise as well as the phase drift and the thermal noise}. When the IRS phase shifts are adjusted to compensate for the phase shifts in the {\color{black}source-IRS} channel and the {\color{black}IRS-destination} channel, we mathematically derive the closed-form expression of the average ACR {\color{black}and the IRS utility} with respect to the number of the reflecting elements, denoted by $N$. From the theoretical and the numerical results, we confirm that {\color{black}the HWI decreases the average ACR and the IRS utility, and imposes more severe impact on the ACR performance as $N$ becomes larger.}

\item[•] In order to optimize the IRS phase shifts and obtain the maximum average ACR with {\color{black}HWI}, we formulate the optimization problems and transform the non-convex problems into convex semidefinite programming (SDP) problems, then obtain the solution numerically by exploiting CVX toolbox with SDPT3 solver in the MATLAB simulation. {\color{black}Besides, we evaluate the impact of the channel estimation errors and the residual phase noises on the optimization performance, after which we conclude that both of the two unavoidable factors result in performance degradation to some extent.}

\item[•] {\color{black}When the HWI appears at the IRS, the DF relay and the transceivers}, we compare the performance of the IRS with that of the {\color{black}multiple-antenna} DF relay in terms of the ACR {\color{black}and the utility}, and derive the condition where the IRS can always surpass the DF relay for all $N>0$. The results illustrate that if $N$ is large enough {\color{black}or the transmitting power is sufficiently high}, the IRS with $N$ passive reflecting elements is able to outperform the DF relay with the same number of antennas in the presence of HWI.

\end{itemize}

The rest of this article is organized as follows. In Section II, we introduce the IRS-aided {\color{black}wireless} communication system with HWI by showing the system model. In Section III, we analyse the ACR {\color{black}and the IRS utility} in the considered wireless communication system. In Section IV, we {\color{black}narrate the problem formulation and transformation} when optimizing the IRS phase shifts {\color{black}in the presence of HWI}. In Section V, we compare the performance of the IRS with that of the multiple-antenna DF relay in terms of the ACR {\color{black}and the utility}. In Section VI, we provide numerical results to verify the theoretical {\color{black}analysis and present discussions on the channel estimation errors and the residual phase noises}. In Section VII, we draw the overall conclusions.

\textit{Notations}: Italics denote the variables or constants, while boldfaces denote the vectors or matrices. $\mathbf{A}^*$, $\mathbf{A}^T$, $\mathbf{A}^H$ and $ \mathbf{A}^{-1}$ symbolize the conjugate, transpose, conjugate-transpose and inverse of matrix $\mathbf{A}$, respectively.  $tr(\mathbf{A})$ and $rank(\mathbf{A})$ stand for the trace and the rank of $\mathbf{A}$. $diag(\mathbf{a})$ represents an $n\times n$ sized diagonal matrix whose diagonal elements are $(a_1,a_2,\ldots,a_n)$ in vector $\mathbf{a}$. $||.||_2$ represents $\ell_2$ norm. {\color{black}$\odot$ symbolizes the Hadamard product.} $\mathbf{A}\in\mathbb{C}^{m\times n}$ or $\mathbf{A}\in\mathbb{R}^{m\times n}$ means that $\mathbf{A}$ is an $m\times n$ sized complex or real-number matrix. $\mathbf{A}\sim\mathcal{CN}(\mathbf{0},\mathbf{V})$ or $\mathbf{A}\sim\mathcal{N}(\mathbf{0},\mathbf{V})$ illustrates that $\mathbf{A}$ obeys complex normal or normal distribution with mean of zero and covariance matrix of $\mathbf{V}$. $\mathbf{A}\succeq\mathbf{0}$ means that $\mathbf{A}$ is positive semidefinite. $\mathbb{E}_\mathbf{x}[\mathbf{A}]$ denotes the expectation of $\mathbf{A}$ on the random variable $\mathbf{x}$ if $\mathbf{A}$ is a stochastic matrix in relation to $\mathbf{x}$. $\mathbf{I}_n$ and $\mathbf{\Gamma}_n$ symbolize $n\times n$ sized identity matrix and $n\times n$ sized matrix with all elements of 1, respectively. $\mathbf{1}$ stands for the unit row vector with all elements of 1. $\Delta=b^2-4ac$ represents the discriminant of the quadratic function $f\left(x\right)=ax^2+bx+c$. {\color{black}$g(x)=\mathcal{O}(f(x))$ signifies that $|g(x)/f(x)|$ is bounded when $x\rightarrow\infty$. $\lim_{x\to\infty}f(x)$ is represented by $f(x)|_{x\rightarrow\infty}$ throughout the whole paper.}

\section{Communication System Model}

In this article, the considered wireless communication system (Figure \ref{Fig-1}) includes a signal-emitting source (e.g.  the base station, BS), an IRS with $N$ passive reflecting elements, an IRS controller and a signal-receiving destination (e.g. the user equipment, UE). The signal-emitting source, assumed to be equipped with single antenna, transmits the modulated signals with an average signal power of $\sqrt{P}$. The IRS induces reconfigurable phase shifts, which are adjusted by the IRS controller based on the channel state information (CSI), on the impinging signals, and reflects the coming signal waves to the destination. The signal-receiving destination, also equipped with single antenna, receives the directly arrived signals from the source and passively reflected signals from the IRS.

\begin{figure}[!t]
\includegraphics[width=4.0in]{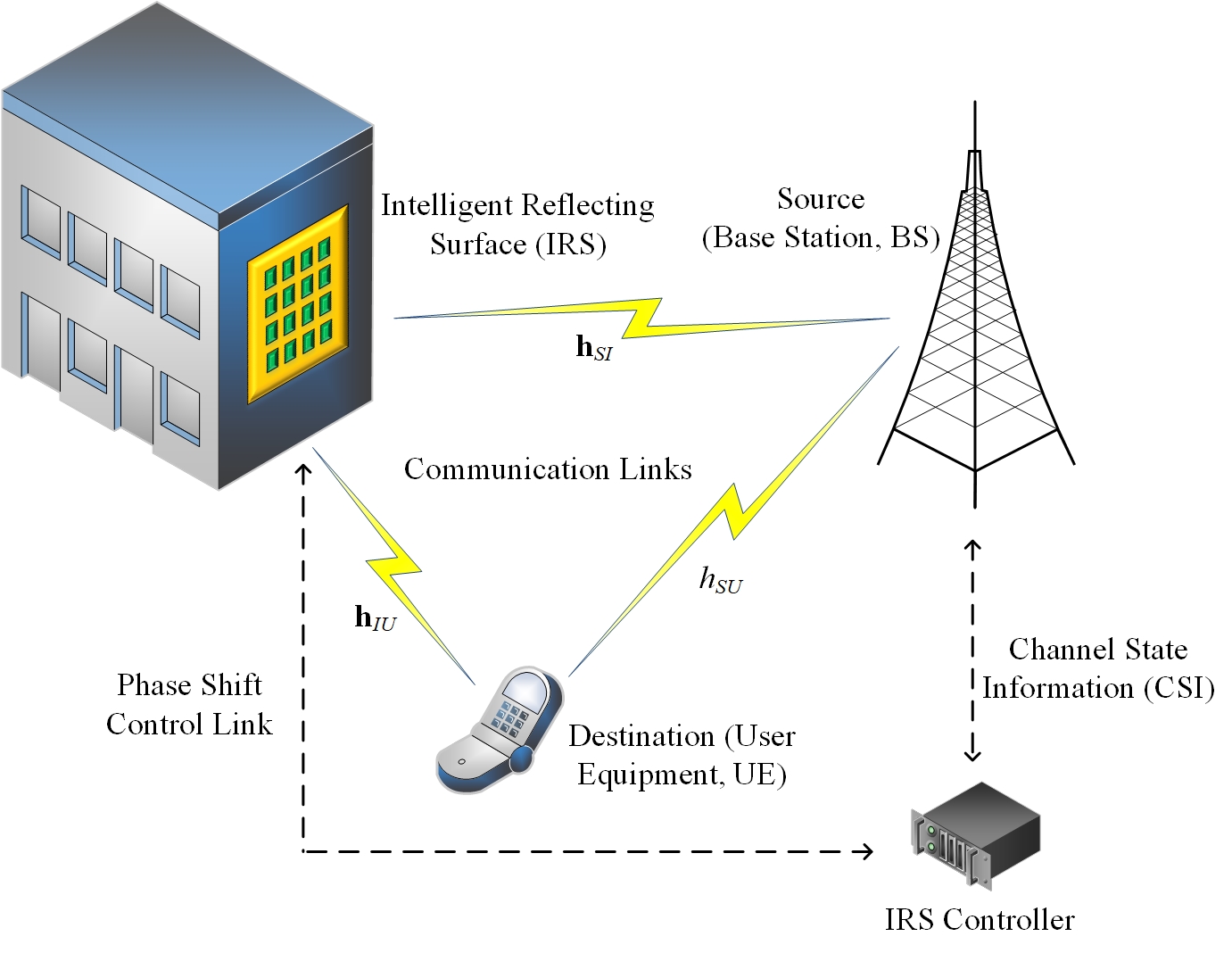}
\centering
\caption{The considered IRS-aided wireless communication system, including a single-antenna signal-emitting source, a single-antenna signal-receiving destination, an IRS with $N$ passive reflecting elements, and an IRS controller.}
\label{Fig-1}
\end{figure}

Generally, due to the non-ideality of the hardware, the received signal is disturbed by the HWI which universally exists in the real-world communication devices. In this considered system, {\color{black} the HWI appears at both the IRS and the signal transceivers}. First, {\color{black} the HWI at the IRS} is modelled as a random diagonal phase error matrix, which involves $N$ random phase errors induced by {\color{black} the} intrinsic hardware {\color{black} imperfection} of the passive reflectors, or by {\color{black} the} imprecision of the channel estimation \cite{M.A.Badiu2020(WCL)}. It is expressed as 
{\color{black}\begin{equation}
\mathbf{\Theta}_E=diag\left(e^{j\theta_{E1}},e^{j\theta_{E2}},\ldots,e^{j\theta_{EN}}\right)
\end{equation}
where} {\color{black}$j^2=-1$}; $\theta_{Ei}$, for $i=1,2,\ldots,N$, are random phase errors uniformly distributed on $\left[-\pi/2,\pi/2\right]$.
{\color{black} Then, the HWIs at the signal transceivers primarily include the additive distortion noise, the multiplicative phase drift and the amplified thermal noise \cite{E.Bjornson2013(CL), E.Bjornson2014(TIT), E.Bjornson2015(TWC), Q.Zhang2018(TWC)}, which create a mismatch between the intended signal and the practically generated signal, or create a distortion on the received signal during the reception processing. The distortion noises, generated by the transmitter and the receiver due to the insufficiency of the accurate modelling, the time-variant characteristics, \textit{et al}., are modelled as $\eta_t(t)\sim\mathcal{CN}(0,\Upsilon_t)$ and $\eta_r(t)\sim\mathcal{CN}(0,V_r)$, respectively, where $\Upsilon_t$ and $V_r$ will be given in (\ref{Upsilon_t}) and (\ref{V_r}).
The multiplicative phase drift caused by the local oscillator at the receiver is modelled as $\phi(t)=e^{j\psi(t)}$, with its expectation given by $\mathbb{E}[\phi(t)]=e^{-\frac{1}{2}\delta t}$ \cite{Q.Zhang2018(TWC)}, where $\delta$ denotes the oscillator quality, and $\psi(t)$ follows a Wiener process:
\begin{equation}
\psi(t)\sim\mathcal{N}(\psi(t-1),\delta)
\end{equation}
The amplified thermal noise, aroused by the mixers at the receiver and by the interference leakage from other frequency bands or wireless networks \cite{E.Bjornson2015(TWC)}, is modelled as $w'(t)\sim\mathcal{CN}(0,\sigma_{w'}^2)$, with $\sigma_{w'}^2$ being the thermal noise variance. }

Therefore, referring to Eq. (2) in \cite{E.Bjornson2014(TIT)} and {\color{black} Eq. (3) in \cite{E.Bjornson2015(TWC)}}, the received signal disturbed by {\color{black} HWI} is modelled as
{\color{black}\begin{equation}\label{eq2-12}
y(t)=\phi(t)\left(\mathbf{h}_{IU}^T\mathbf{\Phi}\mathbf{\Theta}_E\mathbf{h}_{SI}+h_{SU}\right)\left[\sqrt Ps(t)+\eta_t(t)\right]+\eta_r(t)+w(t)
\end{equation}
where} $s(t)$ stands for the unit-power signal symbol at time $t$ with $\mathbb{E}\left[s(t)s^*(t)\right]=1$; $w(t)\sim\mathcal{CN}\left(0,\sigma_w^2\right)$ denotes the receiver noise, whose {\color{black} variance $\sigma_w^2$, according to \cite{E.Bjornson2015(TWC)}, satisfies $\sigma_w^2=F\sigma_{w'}^2$, with $F>1$ being the noise amplification factor}; $\mathbf{\Phi}=\alpha\times diag\left(e^{j\theta_1},e^{j\theta_2},\ldots,e^{j\theta_N}\right)$ represents the phase shifting matrix of the IRS, where $\alpha\in(0,1]$ is the fixed amplitude reflection coefficient and $\theta_i$, for $i=1,2,\ldots,N$, are the adjustable phase-shift variables of the IRS; $h_{SU}=\sqrt{\mu_{SU}}e^{j\varphi_{SU}}$ represents the channel coefficient from the source to the destination, where $\sqrt{\mu_{SU}}$ and $\varphi_{SU}$ are the power attenuation coefficient and the phase shift of $h_{SU}$; $\mathbf{h}_{IU}\in\mathbb{C}^{N\times 1}$ and $\mathbf{h}_{SI}\in\mathbb{C}^{N\times1}$ are the channel coefficients from the IRS to the destination and from the source to the IRS, respectively, which are expressed as{\color{black}\cite{E.Bjornson2020(WCL)}}
\begin{equation}
\mathbf{h}_{IU}=\sqrt{\mu_{IU}}\left(e^{j\varphi_{IU,1}},e^{j\varphi_{IU,2}},\ldots,e^{j\varphi_{IU,N}}\right)^T
\end{equation}
\begin{equation}
\mathbf{h}_{SI}=\sqrt{\mu_{SI}}\left(e^{j\varphi_{SI,1}},e^{j\varphi_{SI,2}},\ldots,e^{j\varphi_{SI,N}}\right)^T
\end{equation}
where $\sqrt{\mu_{IU}}$ and $\sqrt{\mu_{SI}}$ are the power attenuation coefficients of $\mathbf{h}_{IU}$ and $\mathbf{h}_{SI}$; $\varphi_{IU,i}$ and $\varphi_{SI,i}$, for $i=1,2,\ldots,N$, are the phase shifts of $\mathbf{h}_{IU}$ and $\mathbf{h}_{SI}$.
{\color{black} As the distortion noises are proportional to the signal power, we have
\begin{equation}\label{Upsilon_t}
\Upsilon_t=\kappa_tP \mathbb{E}\left[s(t)s^*(t)\right]
\end{equation}
\begin{equation}\label{V_r}
V_r=\kappa_rP\left|\phi(t)\left(\mathbf{h}_{IU}^T\mathbf{\Phi}\mathbf{\Theta}_E\mathbf{h}_{SI}+h_{SU}\right)\right|^2 \mathbb{E}\left[s(t)s^*(t)\right]
\end{equation}
where} $\kappa_t$ and $\kappa_r$ represent the proportionality coefficients which describe the severities of the distortion noises at the transmitter and the receiver, respectively.

For this communication system, we {\color{black} will} {\color{black} derive the approximate closed-form ACR expression and the IRS utility in relation to $N$ in the presence of HWI, and analyse the ACR and utility degradations caused by HWI in the next section}.

\section{ACR Analysis with HWI}
Based on the signal model in {\color{black} (\ref{eq2-12})}, we {\color{black} will} analyse the ACR {\color{black} and the IRS utility} of the considered IRS-aided communication system in the presence of {\color{black} HWI}.
Here we assume that the phase information in the cascaded source-IRS-destination channel model \cite{Q.Nadeem2020(arXiv)} is already estimated before $\mathbf{\Phi}$ is adjusted, so that $\left(\varphi_{IU,i}+\varphi_{SI,i}\right)$, for $i=1,2,\ldots,N$, are known for the IRS phase shift controller. This can be realized via some existing channel estimation techniques in e.g.  \cite{B.Zheng2020(WCL),Q.Nadeem2020(arXiv)}, which estimated the cascaded channel, {\color{black}and \cite{L.Wei2020(Arxiv),L.Wei2020(IEEE SAM)}, which designed robust and effective channel estimation frameworks based on the PARAllel FACtor (PARAFAC)}. In {\color{black} (\ref{eq2-12})}, $\mathbf{h}_{IU}^T\mathbf{\Phi}\mathbf{\Theta}_E\mathbf{h}_{SI}$ is maximized if each phase shift of the IRS is adjusted into $\theta_i=-\left(\varphi_{IU,i}+\varphi_{SI,i}\right)$, for $i=1,2,\ldots,N$, to compensate for the phase shifts in $\mathbf{h}_{IU}$ and $\mathbf{h}_{SI}$ \cite{E.Bjornson2020(WCL)}. As a result, when $\theta_i=-\left(\varphi_{IU,i}+\varphi_{SI,i}\right)$, the received signal affected by {\color{black} HWI} is expressed as
{\color{black} \begin{equation}\label{eq2-7}
y(t)=\phi(t)\left(\alpha\mathbf{g}_{IU}^T\mathbf{\Theta}_E\mathbf{g}_{SI}+h_{SU}\right)\left[\sqrt Ps(t)+\eta_t(t)\right]+\eta_r(t)+w(t)
\end{equation}
where} $\mathbf{g}_{IU}=\sqrt{\mu_{IU}}\mathbf{1}^T$ and $\mathbf{g}_{SI}=\sqrt{\mu_{SI}}\mathbf{1}^T$. {\color{black}Accordingly}, the ACR with {\color{black} HWI} is expressed as
{\color{black} \begin{equation}\label{eq2-16}
\begin{split}
R_{HWI}\left(N\right)&=\log_2\left\{1+\frac{P\left|\phi(t)\left(\alpha\mathbf{g}_{IU}^T\mathbf{\Theta}_E\mathbf{g}_{SI}+h_{SU}\right)\right|^2}{P(\kappa_t+\kappa_r)\left|\phi(t)\left(\alpha\mathbf{g}_{IU}^T\mathbf{\Theta}_E\mathbf{g}_{SI}+h_{SU}\right)\right|^2+\sigma_w^2}\right\}
\end{split}
\end{equation}}

Based on (\ref{eq2-16}), we obtain the following theorem.

{\color{black} \begin{theorem}
When $\theta_i=-\left(\varphi_{IU,i}+\varphi_{SI,i}\right)$ and $\theta_{Ei}$ is uniformly distributed on $\left[-\pi/2,\pi/2\right]$, the approximate average ACR with {\color{black} HWI} is expressed as
\begin{equation}\label{eq2-17}
\overline{R_{HWI}}\left(N\right)=\log_2\left\{1+\frac{\beta N^2+\lambda N+\mu_{SU}}
{\left(\kappa_t+\kappa_r\right)\left(\beta N^2+\lambda N+\mu_{SU}\right)+\frac{\sigma_w^2}{P}}\right\}
\end{equation}
where
\begin{equation}\label{eq-revision-1}
\beta=\frac{4\alpha^2\mu_{IU}\mu_{SI}}{\pi^2}
\end{equation}
\begin{equation}\label{eq-revision-2}
\lambda=\left(1-\frac{4}{\pi^2}\right)\alpha^2\mu_{IU}\mu_{SI}+\frac{4\alpha}{\pi}\mu_{IU}^{\frac{1}{2}}\mu_{SI}^{\frac{1}{2}}\mu_{SU}^{\frac{1}{2}}\cos{(\varphi_{SU})}
\end{equation}

The IRS utility with HWI, defined by $\gamma_{HWI}(N)=\frac{\partial\overline{R_{HWI}}\left(N\right)}{\partial N}$ according to the Definition 1 in \cite{S.Hu2018(GlobalCom)}, is expressed as
\begin{equation}\label{Utility Expression}
\begin{split}
\gamma_{HWI}(N)=\frac{\sigma_w^2}{P}(2\beta N+\lambda)\left\{\left[(\kappa_t+\kappa_r)(\beta N^2+\lambda N+\mu_{SU})+\frac{\sigma_w^2}{P}\right]\times\right.\\
\left. \left[(\kappa_t+\kappa_r+1)(\beta N^2+\lambda N+\mu_{SU})+\frac{\sigma_w^2}{P}\right]\ln 2\right\}^{-1}
\end{split}
\end{equation}
\end{theorem}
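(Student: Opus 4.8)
The plan is to start from the signal-to-interference-plus-noise ratio already isolated in (\ref{eq2-16}) and to exploit that the multiplicative phase drift $\phi(t)=e^{j\psi(t)}$ has unit modulus. Hence $|\phi(t)(\alpha\mathbf{g}_{IU}^T\mathbf{\Theta}_E\mathbf{g}_{SI}+h_{SU})|^2$ reduces to $A:=|\alpha\mathbf{g}_{IU}^T\mathbf{\Theta}_E\mathbf{g}_{SI}+h_{SU}|^2$, the only quantity that still carries randomness, and that randomness comes entirely from the phase-error matrix $\mathbf{\Theta}_E$. Setting $g(x)=\log_2(1+Px/(P(\kappa_t+\kappa_r)x+\sigma_w^2))$, I would then invoke the usual mean-value approximation $\mathbb{E}_{\mathbf{\Theta}_E}[g(A)]\approx g(\mathbb{E}_{\mathbf{\Theta}_E}[A])$ --- reasonable because $A$ is built from $N$ independent unit-modulus phasors and concentrates about its mean --- so the whole problem collapses to computing $\mathbb{E}_{\mathbf{\Theta}_E}[A]$.

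To evaluate $\mathbb{E}_{\mathbf{\Theta}_E}[A]$, I would substitute $\mathbf{g}_{IU}=\sqrt{\mu_{IU}}\mathbf{1}^T$ and $\mathbf{g}_{SI}=\sqrt{\mu_{SI}}\mathbf{1}^T$, so that $\mathbf{g}_{IU}^T\mathbf{\Theta}_E\mathbf{g}_{SI}=\sqrt{\mu_{IU}\mu_{SI}}\,S$ with $S:=\sum_{i=1}^{N}e^{j\theta_{Ei}}$, and expand $A=\alpha^2\mu_{IU}\mu_{SI}|S|^2+2\alpha\sqrt{\mu_{IU}\mu_{SI}\mu_{SU}}\,\mathrm{Re}(Se^{-j\varphi_{SU}})+\mu_{SU}$. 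Since $\theta_{Ei}$ is uniform on $[-\pi/2,\pi/2]$ one gets $\mathbb{E}[e^{j\theta_{Ei}}]=2/\pi$ (the imaginary part vanishing by symmetry), hence $\mathbb{E}[S]=2N/\pi$; using independence of the $\theta_{Ei}$ gives $\mathbb{E}[|S|^2]=N+N(N-1)(2/\pi)^2$. Grouping the $N^2$, $N$ and constant terms and matching coefficients then produces $\mathbb{E}_{\mathbf{\Theta}_E}[A]=\beta N^2+\lambda N+\mu_{SU}$ with $\beta$ and $\lambda$ exactly as in (\ref{eq-revision-1})--(\ref{eq-revision-2}); feeding this into $g(\cdot)$ and dividing numerator and denominator by $P$ yields (\ref{eq2-17}).

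For the IRS utility I would abbreviate $u(N):=\beta N^2+\lambda N+\mu_{SU}$, $c:=\kappa_t+\kappa_r$ and $d:=\sigma_w^2/P$, so that $\overline{R_{HWI}}(N)=(\ln 2)^{-1}[\ln((c+1)u+d)-\ln(cu+d)]$, and then differentiate with respect to $N$ with $u'(N)=2\beta N+\lambda$. Placing the two resulting fractions over the common denominator $((c+1)u+d)(cu+d)$, the numerator telescopes because $(c+1)(cu+d)-c((c+1)u+d)=d$, which leaves $\gamma_{HWI}(N)=u'(N)\,d\,\{\,\ln 2\cdot((c+1)u+d)(cu+d)\,\}^{-1}$; restoring the abbreviations reproduces (\ref{Utility Expression}) exactly (up to reordering the two bracketed factors).

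I expect the only genuinely delicate point to be the mean-value step $\mathbb{E}[g(A)]\approx g(\mathbb{E}[A])$: one checks that $g$ is increasing and concave in $A$, so Jensen's inequality only gives $\mathbb{E}[g(A)]\le g(\mathbb{E}[A])$ and the stated expression is strictly an upper bound; a fully rigorous statement would need a concentration bound on $A$ (or a leading-order expansion in $N$) to control the gap. Everything else --- the moment computation for a sum of i.i.d. phasors and the differentiation yielding the utility --- is routine.
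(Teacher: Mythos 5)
Your proposal is correct and follows essentially the same route as the paper's Appendix A: the same mean-value approximation $\mathbb{E}[\log_2(1+x/y)]\approx\log_2(1+\mathbb{E}[x]/\mathbb{E}[y])$, the same reduction to the first and second moments of the sum of phasors (you factor $\mathbb{E}[e^{j(\theta_{Ei}-\theta_{Ej})}]$ directly by independence where the paper integrates against the triangular density of $\theta_{Ei}-\theta_{Ej}$, but both give $4/\pi^2$), and the same differentiation for the utility. Your added remark that Jensen's inequality makes the closed form an upper bound on the true average is a correct refinement the paper does not spell out, consistent with the theorem's wording "approximate."
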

\begin{proof}
The proof is given in Appendix A.
\end{proof}}

{\color{black} Subsequently, for theoretically evaluating the impact that the HWI has on the ACR and the IRS utility, we further calculate the rate gap, defined by $\delta_R(N)=R(N)-\overline{R_{HWI}}\left(N\right)$, and the utility gap, defined by $\delta_{\gamma}(N)=\gamma(N)-\gamma_{HWI}(N)$ in the following \textbf{Lemma 1}, where $R(N)$ and $\gamma(N)$, denoting the ACR and the IRS utility without HWI, will be given in the proof.}

\begin{lemma}
When $\theta_i=-\left(\varphi_{IU,i}+\varphi_{SI,i}\right)$ and $\theta_{Ei}$ is uniformly distributed on $\left[-\pi/2,\pi/2\right]$, the rate gap {\color{black}$\delta_R\left(N\right)$} between the average ACR{\color{black}s} with and without HWI is expressed as
{\color{black}\begin{equation}\label{eq2-18}
\delta_R\left(N\right)\!=\!\log_2\left\{\frac{P\left(\kappa_t+\kappa_r\right)\chi+\sigma_w^2+P^2\chi\varpi\left(\frac{\kappa_t}{\sigma_w^2}+\frac{\kappa_r}{\sigma_w^2}\right)+P\varpi}{P(\kappa_t+\kappa_r+1)\chi+\sigma_w^2}\right\}
\end{equation}
where 
\begin{equation}\label{eq-revision-3}
\varpi=\frac{\pi^2}{4}\beta N^2+\rho N+\mu_{SU}
\end{equation}
\begin{equation}\label{eq-revision-4}
\chi=\beta N^2+\lambda N+\mu_{SU}
\end{equation}
with $\rho$ given by $\rho=2\alpha\mu_{IU}^{\frac{1}{2}}\mu_{SI}^{\frac{1}{2}}\mu_{SU}^{\frac{1}{2}}\cos{(\varphi_{SU})}$.

The utility gap $\delta_{\gamma}(N)$ between the IRS utilities with and without HWI is expressed as
\begin{equation}\label{Utility Degradation}
\begin{split}
\delta_\gamma(N)=&
\left[P^3\chi^2(\kappa_t+\kappa_r+1)\left(\frac{\kappa_t}{\sigma_w^2}+\frac{\kappa_r}{\sigma_w^2}\right)\frac{\partial\varpi}{\partial N}+
P^2(\kappa_t+\kappa_r+1)\left(\frac{\partial\varpi}{\partial N}\chi-\frac{\partial\chi}{\partial N}\varpi\right)+\right.\\
&\left. P^2\left(\kappa_t+\kappa_r\right)\left(\frac{\partial\varpi}{\partial N}\chi+\frac{\partial\chi}{\partial N}\varpi\right)+
P\sigma_w^2\left(\frac{\partial\varpi}{\partial N}-\frac{\partial\chi}{\partial N}\right)\right]\times
\left\{
\left[P(\kappa_t+\kappa_r)\chi+\sigma_w^2+\right.\right.\\
&\left.\left.P^2\varpi\chi\left(\frac{\kappa_t}{\sigma_w^2}+\frac{\kappa_r}{\sigma_w^2}\right)+P\varpi\right]
\left[P(\kappa_t+\kappa_r+1)\chi+\sigma_w^2\right]\ln 2
\right\}^{-1}
\end{split}
\end{equation}
where $\frac{\partial\varpi}{\partial N}=\frac{\pi^2}{2}\beta N+\rho$ and $\frac{\partial\chi}{\partial N}=2\beta N+\lambda$ are the partial derivatives of $\varpi$ and $\chi$, respectively.}
\end{lemma}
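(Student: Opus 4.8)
The plan is to obtain the no-HWI benchmark first, then form the difference, and finally differentiate in $N$. For Step~1 I would start from (\ref{eq2-7}) and switch off all impairments, i.e. set $\kappa_t=\kappa_r=0$ (no distortion noise), $\phi(t)\equiv 1$ (no phase drift) and $\mathbf{\Theta}_E=\mathbf{I}_N$ (no reflection phase error), while keeping the receiver-noise variance $\sigma_w^2$. The received signal then reduces to $y(t)=(\alpha\mathbf{g}_{IU}^T\mathbf{g}_{SI}+h_{SU})\sqrt{P}s(t)+w(t)$, so $R(N)=\log_2\!\left(1+P\,|\alpha\mathbf{g}_{IU}^T\mathbf{g}_{SI}+h_{SU}|^2/\sigma_w^2\right)$. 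With $\mathbf{g}_{IU}=\sqrt{\mu_{IU}}\mathbf{1}^T$ and $\mathbf{g}_{SI}=\sqrt{\mu_{SI}}\mathbf{1}^T$ one has $\mathbf{g}_{IU}^T\mathbf{g}_{SI}=N\sqrt{\mu_{IU}\mu_{SI}}$, and expanding $|\alpha N\sqrt{\mu_{IU}\mu_{SI}}+\sqrt{\mu_{SU}}e^{j\varphi_{SU}}|^2$ and matching it against $\beta$ in (\ref{eq-revision-1}) and $\rho$ shows that it equals $\varpi$ of (\ref{eq-revision-3}). Hence $R(N)=\log_2(1+P\varpi/\sigma_w^2)$, and, using the same Definition~1 of \cite{S.Hu2018(GlobalCom)} that defines $\gamma_{HWI}$, $\gamma(N)=\partial R(N)/\partial N = P\big(\tfrac{\pi^2}{2}\beta N+\rho\big)/\big[(\sigma_w^2+P\varpi)\ln 2\big]$.

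For Step~2 (the rate gap) I would rewrite (\ref{eq2-17}) as a single logarithm, $\overline{R_{HWI}}(N)=\log_2\big\{[P(\kappa_t+\kappa_r+1)\chi+\sigma_w^2]/[P(\kappa_t+\kappa_r)\chi+\sigma_w^2]\big\}$ with $\chi$ as in (\ref{eq-revision-4}), subtract it from $R(N)=\log_2[(\sigma_w^2+P\varpi)/\sigma_w^2]$ via $\log a-\log b=\log(a/b)$, cross-multiply the two ratios, expand the resulting numerator $(\sigma_w^2+P\varpi)[P(\kappa_t+\kappa_r)\chi+\sigma_w^2]$ and divide through by $\sigma_w^2$. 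Collecting term by term yields the four-term numerator and the denominator $P(\kappa_t+\kappa_r+1)\chi+\sigma_w^2$ of (\ref{eq2-18}); this step is routine algebra.

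For Step~3 (the utility gap), since both utilities are by definition the $N$-derivatives of their respective average ACRs, $\delta_\gamma(N)=\gamma(N)-\gamma_{HWI}(N)=\partial\delta_R(N)/\partial N$, so I would just differentiate (\ref{eq2-18}). Writing $\delta_R(N)=\log_2[A(N)/B(N)]$ with $A$, $B$ the numerator and denominator of (\ref{eq2-18}), the quotient rule gives $\delta_\gamma(N)=(A'B-AB')/(AB\ln 2)$, and $AB\ln 2$ is already the displayed denominator of (\ref{Utility Degradation}). The remaining work is expanding $A'B-AB'$: the $P^2(\kappa_t+\kappa_r)(\kappa_t+\kappa_r+1)\chi\chi'$ pieces cancel, the cubic-in-$P$ contribution simplifies via $\chi'\varpi+\chi\varpi'-\varpi\chi'=\chi\varpi'$, and $\big(\tfrac{\kappa_t}{\sigma_w^2}+\tfrac{\kappa_r}{\sigma_w^2}\big)\sigma_w^2=\kappa_t+\kappa_r$ merges two terms; collecting the survivors with $\partial\varpi/\partial N=\tfrac{\pi^2}{2}\beta N+\rho$ and $\partial\chi/\partial N=2\beta N+\lambda$ reproduces (\ref{Utility Degradation}).

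The proof is conceptually light; the two points needing care are making explicit that the ``without HWI'' model retains the noise floor $\sigma_w^2$ while removing $\kappa_t,\kappa_r$, $\phi(t)$ and $\mathbf{\Theta}_E$ (so that $R$ and $\overline{R_{HWI}}$ are compared on the same footing), and the bookkeeping in Step~3 so that the expansion of $A'B-AB'$ genuinely telescopes to the compact four-term numerator rather than a bulkier polynomial. I would also note that, since $\overline{R_{HWI}}(N)$ carries the phase-error averaging approximation of Theorem~1, $\delta_R(N)$ and $\delta_\gamma(N)$ are approximate in the same sense.
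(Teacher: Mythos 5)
Your proposal is correct and follows essentially the same route as the paper: establish the no-HWI benchmark $R(N)=\log_2(1+P\varpi/\sigma_w^2)$ (which the paper simply cites from \cite{E.Bjornson2020(WCL)} rather than rederiving from the impairment-free signal model as you do), then form $\delta_R(N)=R(N)-\overline{R_{HWI}}(N)$ and obtain $\delta_\gamma(N)=\partial\delta_R(N)/\partial N$ by the quotient rule. Your algebraic bookkeeping (the cancellation of the $P^2(\kappa_t+\kappa_r)(\kappa_t+\kappa_r+1)\chi\chi'$ terms, the telescoping of the cubic-in-$P$ piece to $\chi^2\varpi'$, and the merge via $(\kappa_t+\kappa_r)\sigma_w^{-2}\cdot\sigma_w^2=\kappa_t+\kappa_r$) checks out and reproduces the stated four-term numerator.
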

{\color{black} \begin{proof}
According to \cite{E.Bjornson2020(WCL)}, $R\left(N\right)$ is expressed as 
\begin{equation}\label{eq2-11}
R\left(N\right)=\log_2\left\{1+\frac{P\left[\alpha^2N^2\mu_{IU}\mu_{SI}+2\alpha N\sqrt{\mu_{IU}\mu_{SI}\mu_{SU}}\cos{\left(\varphi_{SU}\right)}+\mu_{SU}\right]}{\sigma_w^2}\right\}
\end{equation}

Then, the corresponding $\gamma(N)$, defined by $\gamma(N)=\frac{\partial R(N)}{\partial N}$, is given by
\begin{equation}\label{Utility Expression 2}
\begin{split}
\gamma(N)&=\frac{\frac{P}{\sigma_w^2}\left[2\alpha^2\mu_{IU}\mu_{SI}N+2\alpha\sqrt{\mu_{IU}\mu_{SI}\mu_{SU}}\cos{\left(\varphi_{SU}\right)}\right]}
{\left\{1+\frac{P}{\sigma_w^2}\left[\alpha^2 N^2\mu_{IU}\mu_{SI}+2\alpha N\sqrt{\mu_{IU}\mu_{SI}\mu_{SU}}\cos{\left(\varphi_{SU}\right)}+\mu_{SU}\right]\right\}\ln 2}
\end{split}
\end{equation}

Thereupon, by calculating $\delta_R(N)=R(N)-\overline{R_{HWI}}\left(N\right)$ and $\delta_{\gamma}(N)=\gamma(N)-\gamma_{HWI}(N)=\frac{\partial R(N)}{\partial N}-\frac{\partial \overline{R_{HWI}}(N)}{\partial N}=\frac{\partial \delta_R(N)}{\partial N}$, we derive the above (\ref{eq2-18}) and (\ref{Utility Degradation}).
\end{proof}}

{\color{black}\textbf{Theorem 1} demonstrates that: 1) although the $\overline{R_{HWI}}\left(N\right)$ increases with $N$, the proportionality coefficient $\beta$ on $N^2$ in $\overline{R_{HWI}}\left(N\right)$ is smaller than $\alpha^2\mu_{IU}\mu_{SI}$ in $R(N)$, hinting that $\overline{R_{HWI}}\left(N\right)$ rises more slowly than $R(N)$. 
2) When $N\rightarrow\infty$, the $\overline{R_{HWI}}\left(N\right)$ is limited by
\begin{equation}\label{R_HWI Upper Bound}
\left.\overline{R_{HWI}}\left(N\right)\right|_{N\rightarrow\infty}=\log_2\left(1+\frac{1}{\kappa_t+\kappa_r}\right)
\end{equation}
which signifies that even if $N$ becomes significantly large or tends to be infinite, the potential growth of $\overline{R_{HWI}}\left(N\right)$ will be primarily restricted by $\kappa_t$ and $\kappa_r$ of the HWI at the signal transceivers. On the contrary, $R(N)$ continuously ascends without bound as $N$ grows. 
3) The $\gamma_{HWI}(N)$ is inversely proportional to $\mathcal{O}(N^3)$, which indicates that the IRS utility with HWI descends as $N$ grows, and is close to zero when $N\rightarrow\infty$. This implies that if $N$ is extremely large or nearly infinite, adding more passive reflecting elements will contribute to little ACR improvement when there exists HWI.

\textbf{Lemma 1} illustrates that: 1) the rate gap $\delta_R(N)>0$ for $N>0$, which indicates that the ACR is degraded by HWI. 2) The $\delta_R(N)$ increases with $N$, because the numerator inside $\log_2(.)$ contains $\chi\varpi$ which is proportional to $\mathcal{O}(N^4)$, while the denominator inside $\log_2(.)$ merely involves $\chi$ which is proportional to $\mathcal{O}(N^2)$. This implies that as $N$ grows, the IRS-aided wireless communication system will suffer from more serious ACR degradation. 3) The utility gap $\delta_{\gamma}(N)>0$, because by expanding $\left(\frac{\partial\varpi}{\partial N}\chi-\frac{\partial\chi}{\partial N}\varpi\right)$ and $\left(\frac{\partial\varpi}{\partial N}-\frac{\partial\chi}{\partial N}\right)$ in (\ref{Utility Degradation}), we have
\begin{equation}
\begin{split}
&\left(\frac{\partial\varpi}{\partial N}\chi-\frac{\partial\chi}{\partial N}\varpi\right)=
\frac{4\alpha^2\mu_{IU}\mu_{SI}N^2}{\pi^2}\left[\left(\frac{\pi^2}{4}-1\right)\alpha^2\mu_{IU}\mu_{SI}+(\pi-2)\alpha\mu_{IU}^{\frac{1}{2}}\mu_{SI}^{\frac{1}{2}}\mu_{SU}^{\frac{1}{2}}\cos{(\varphi_{SU})} \right]+\\
&\left[\left(2-\frac{8}{\pi^2}\right)N+\frac{4}{\pi^2}-1\right]\alpha^2\mu_{IU}\mu_{SI}\mu_{SU}+
\left(2-\frac{4}{\pi}\right)\alpha\mu_{IU}^{\frac{1}{2}}\mu_{SI}^{\frac{1}{2}}\mu_{SU}^{\frac{3}{2}}\cos{(\varphi_{SU})}>0
\end{split}
\end{equation}
\begin{equation}
\begin{split}
\left(\frac{\partial\varpi}{\partial N}-\frac{\partial\chi}{\partial N}\right)=(2N-1)\left(1-\frac{4}{\pi^2}\right)\alpha^2\mu_{IU}\mu_{SI}+\left(2-\frac{4}{\pi}\right)\alpha\mu_{IU}^{\frac{1}{2}}\mu_{SI}^{\frac{1}{2}}\mu_{SU}^{\frac{1}{2}}\cos{(\varphi_{SU})}>0
\end{split}
\end{equation}

This reveals that the IRS utility will be also degraded by HWI to some extent.}

It is notable that the results in \textbf{Theorem 1} and \textbf{Lemma 1} are derived on the basis of $\theta_i=-\left(\varphi_{IU,i}+\varphi_{SI,i}\right)$, which is configured to compensate for the phase shifts in $\mathbf{h}_{IU}$ and $\mathbf{h}_{SI}$ \cite{E.Bjornson2020(WCL)}. However, $\theta_i=-\left(\varphi_{IU,i}+\varphi_{SI,i}\right)$ might not be optimal in this considered wireless propagation environment, as it does not take the phase shift in $h_{SU}$ into account. Thus, in Section IV, we will optimize the IRS phase shifts and reconfigure the phase shifting matrix to obtain the maximum ACR in the presence of HWI.

\section{Phase Shift Optimization}
Instead of configuring $\theta_i=-\left(\varphi_{IU,i}+\varphi_{SI,i}\right)$ to evaluate the ACR, we {\color{black} will} formulate the optimization problem to optimize the IRS phase shifts with {\color{black} HWI} in this section. 

{\color{black}\subsection{Problem Formulation and Transformation}}

Here, we retrospect {\color{black} (\ref{eq2-12})}, from which we obtain the ACR with {\color{black} HWI}:
{\color{black} \begin{equation}\label{original ACR with HWI}
\begin{split}
R_{\mathbf{\Phi},HWI}\left(N\right)&=\log_2\left\{1+\frac{P\left|\phi(t)\left(\mathbf{h}_{IU}^T\mathbf{\Phi}\mathbf{\Theta}_E\mathbf{h}_{SI}+h_{SU}\right)\right|^2}{P(\kappa_t+\kappa_r)\left|\phi(t)\left(\mathbf{h}_{IU}^T\mathbf{\Phi}\mathbf{\Theta}_E\mathbf{h}_{SI}+h_{SU}\right)\right|^2+\sigma_w^2}\right\}
\end{split}
\end{equation}

Therefore, aiming at maximizing the received SNR, we can formulate the phase shift optimization problem as
\begin{subequations}\label{original problem}
\begin{align}
(\mathrm{P1}):\ \mathop{\max}\limits_{\mathbf{\Phi}}&\frac{P\left|\phi(t)\left(\mathbf{h}_{IU}^T\mathbf{\Phi}\mathbf{\Theta}_E\mathbf{h}_{SI}+h_{SU}\right)\right|^2}{P(\kappa_t+\kappa_r)\left|\phi(t)\left(\mathbf{h}_{IU}^T\mathbf{\Phi}\mathbf{\Theta}_E\mathbf{h}_{SI}+h_{SU}\right)\right|^2+\sigma_w^2}
\\s.t.\ &\left|\left[\mathbf{\Phi}\right]_{(n,n)}\right|=\alpha,\ n=1,2\ldots N
\end{align}
\end{subequations}

However, the objective function (OBF) in (\ref{original problem}a) is non-concave with respect to $\mathbf{\Phi}$, and the constraint in (\ref{original problem}b) is non-convex. Thus, inspired by \cite{M.Cui2019(WCL)}, we will convert (P1) into another solvable form.}


Let $\mathbf{D}_{IU}$ denote a diagonal matrix expressed as $\mathbf{D}_{IU}=diag\left(\mathbf{h}_{IU}\right)$, and $\bm{\theta}$ denote a column vector expressed as $\bm{\theta}=\alpha\left(e^{j\theta_1},e^{j\theta_2},\ldots,e^{j\theta_N}\right)^T$. Then, we have $\bm{\theta}^T\mathbf{D}_{IU}=\mathbf{h}_{IU}^T\mathbf{\Phi}$. By replacing $\mathbf{h}_{IU}^T\mathbf{\Phi}$ with $\bm{\theta}^T\mathbf{D}_{IU}$, we expand (\ref{original ACR with HWI}) into
{\color{black} \begin{equation}\label{eq2-22}
R_{\bm{\theta},HWI}\left(N\right)=\log_2{\left\{1+\frac{P\left(Z+||h_{SU}||_2^2\right)}{P(\kappa_t+\kappa_r)\left(Z+||h_{SU}||_2^2\right)+\sigma_w^2}\right\}}
\end{equation}
where}
$Z=\mathbf{h}_{SI}^H\mathbf{\Theta}_E^H\mathbf{D}_{IU}^H\bm{\theta}^*\bm{\theta}^T\mathbf{D}_{IU}\mathbf{\Theta}_E\mathbf{h}_{SI}+\mathbf{h}_{SI}^H\mathbf{\Theta}_E^H\mathbf{D}_{IU}^H\bm{\theta}^*h_{SU}+h_{SU}^*\bm{\theta}^T\mathbf{D}_{IU}\mathbf{\Theta}_E\mathbf{h}_{SI}$.

Let $\mathbf{a}$ be defined by $\mathbf{a}=\left(\bm{\theta}^T,1\right)^H$. We can rewrite $Z$ as $Z=\mathbf{a}^H\mathbf{\Xi}\mathbf{a}$, where
\begin{equation}\label{eq2-24}
\mathbf{\Xi}=\left(\begin{matrix}\mathbf{D}_{IU}\mathbf{\Theta}_E\mathbf{h}_{SI}\mathbf{h}_{SI}^H\mathbf{\Theta}_E^H\mathbf{D}_{IU}^H&h_{SU}^*\mathbf{D}_{IU}\mathbf{\Theta}_E\mathbf{h}_{SI}\\\mathbf{h}_{SI}^H\mathbf{\Theta}_E^H\mathbf{D}_{IU}^Hh_{SU}&0\\\end{matrix}\right)
\end{equation}

Therefore, {\color{black} $R_{\bm{\theta},HWI}\left(N\right)$} can be simplified into
{\color{black}\begin{equation}\label{eq2-25}
\begin{split}
R_{\bm{\theta},HWI}\left(N\right)=&\log_2{\left\{1+\frac{P\left(\mathbf{a}^H\mathbf{\Xi}\mathbf{a}+||h_{SU}||_2^2\right)}{P(\kappa_t+\kappa_r)\left(\mathbf{a}^H\mathbf{\Xi}\mathbf{a}+||h_{SU}||_2^2\right)+\sigma_w^2}\right\}} \\
=&\log_2{\left\{1+\frac{P\left[tr(\mathbf{\Xi}\mathbf{X})+||h_{SU}||_2^2\right]}{P(\kappa_t+\kappa_r)\left[tr(\mathbf{\Xi}\mathbf{X})+||h_{SU}||_2^2\right]+\sigma_w^2}\right\}}
\end{split}
\end{equation}
where }
\begin{equation}\label{eq2-26}
\mathbf{X}=\mathbf{a}\mathbf{a}^H=\left(\begin{matrix}\bm{\theta}^*\bm{\theta}^T&\bm{\theta}^*\\\bm{\theta}^T&1\\\end{matrix}\right)\in\mathbb{C}^{\left(N+1\right)\times\left(N+1\right)}
\end{equation}

Then, the optimization problem is formulated as
{\color{black}\begin{subequations}\label{converted problem 1}
\begin{align}
(\mathrm{P2}):\ \mathop{\max}\limits_{\bm{\theta}}&\frac{P\left[tr(\mathbf{\Xi}\mathbf{X})+||h_{SU}||_2^2\right]}{P(\kappa_t+\kappa_r)\left[tr(\mathbf{\Xi}\mathbf{X})+||h_{SU}||_2^2\right]+\sigma_w^2}
\\s.t.\ &\left|\left[\bm{\theta}\right]_{n}\right|=\alpha,\ n=1,2\ldots N
\end{align}
\end{subequations}
which is still non-convex due to the complicated non-concave OBF in (\ref{converted problem 1}a) and the non-convex module constraint in (\ref{converted problem 1}b).

Here, from $\bm{\theta}^*\bm{\theta}^T$ in (\ref{eq2-26}), it can be realized that the diagonal entries in $\mathbf{X}$ embody the modules of the elements in $\bm{\theta}$. Thus, we define a simple matrix $\mathbf{E}_n$, whose $(i,j)$-th element is given by
\begin{equation}\label{eq2-28}
\left[\mathbf{E}_n\right]_{(i,j)}=\left\{\begin{matrix}1,\ \ \ \ \ i=j=n\\0,\ \ \ \ otherwise\\\end{matrix}\right.
\end{equation}

As a result, the optimization problem is converted into
\begin{subequations}\label{converted problem 2}
\begin{align}
(\mathrm{P3}):\ \mathop{\max}\limits_{\mathbf{X}\succeq\mathbf{0}}&\frac{P\left[tr(\mathbf{\Xi}\mathbf{X})+||h_{SU}||_2^2\right]}{P(\kappa_t+\kappa_r)\left[tr(\mathbf{\Xi}\mathbf{X})+||h_{SU}||_2^2\right]+\sigma_w^2}
\\s.t.\ &tr\left(\mathbf{E}_n\mathbf{X}\right)=\alpha^2,\ n=1,2\ldots N
\\&tr\left(\mathbf{E}_{N+1}\mathbf{X}\right)=1
\\&rank(\mathbf{X})=1
\end{align}
\end{subequations}
where the constraint in (\ref{converted problem 2}b) is transformed from the module constraint of $\left|\left[\bm{\theta}\right]_n\right|^2=\left[\bm{\theta}^*\bm{\theta}^T\right]_{(n,n)}=\mathbf{a}^H\mathbf{E}_n\mathbf{a}=tr\left(\mathbf{E}_n\mathbf{X}\right)=\alpha^2$, for $n=1,2\ldots N$; the constraint in (\ref{converted problem 2}c) is transformed from $\left[\mathbf{X}\right]_{(N+1,N+1)}=1$; the constraint in (\ref{converted problem 2}d) is responsible for strictly guaranteeing that 1) the resolved $\mathbf{X}$ can be decomposed into $\mathbf{X}=\mathbf{a}\mathbf{a}^H$, and 2) the solution of the phase shift in $\bm{\theta}$ in the resolved $\mathbf{X}$ is equivalent to the solution of the phase shift in $\mathbf{\Phi}$ in (P1).

Nevertheless, (P3) is still non-convex and is difficult to solve. Therefore, the problem transformation should be further performed. Thanks to the Charnes-Cooper transformation \cite{L.Liu2014(TSP), Charnes1962}}, let $\mathbf{Y}$ and $\mu$ be defined by $\mathbf{Y}=\mu\mathbf{X}$ and
$\mu=\frac{1}{tr(\mathbf{\Xi}\mathbf{X})+||h_{SU}||_2^2+\frac{\sigma_w^2}{P\left(\kappa_t+\kappa_r\right)}}$.
Then, the OBF in (\ref{converted problem 2}a) is expressed as
$\frac{1}{\left(\kappa_t+\kappa_r\right)}\times\left[tr(\mathbf{\Xi}\mathbf{Y})+\mu||h_{SU}||_2^2\right]$.
Therefore, (P3) is transformed into

\begin{subequations}\label{converted problem 3}
\begin{align}
(\mathrm{P4}):\ \mathop{\max}\limits_{\mathbf{Y}\succeq\mathbf{0},\mu\geq 0}&\frac{1}{\left(\kappa_t+\kappa_r\right)}\times\left[tr(\mathbf{\Xi}\mathbf{Y})+\mu||h_{SU}||_2^2\right]
\\s.t.\ &tr\left(\mathbf{E}_n\mathbf{Y}\right)=\mu\alpha^2,\ n=1,2\ldots N
\\&tr\left(\mathbf{E}_{N+1}\mathbf{Y}\right)=\mu
\\&tr\left(\mathbf{\Xi Y}\right)+\mu\left[||h_{SU}||_2^2+\frac{\sigma_w^2}{P\left(\kappa_t+\kappa_r\right)}\right]=1
\\&rank(\mathbf{Y})=1
\end{align}
\end{subequations}

{\color{black} Although (P4) is still non-convex due to the non-convex (\ref{converted problem 3}e), it can be relaxed if the constraint of $rank(\mathbf{Y})=1$ is omitted. Hence, the relaxed problem is formulated as}
\begin{subequations}\label{eq2-43}
\begin{align}
(\mathrm{P5}):\ \mathop{\max}\limits_{\mathbf{Y}\succeq\mathbf{0},\mu\geq 0}&\frac{1}{\left(\kappa_t+\kappa_r\right)}\times\left[tr(\mathbf{\Xi}\mathbf{Y})+\mu||h_{SU}||_2^2\right]
\\s.t.\ &tr\left(\mathbf{E}_n\mathbf{Y}\right)=\mu\alpha^2,\ n=1,2\ldots N
\\&tr\left(\mathbf{E}_{N+1}\mathbf{Y}\right)=\mu
\\&tr\left(\mathbf{\Xi Y}\right)+\mu\left[||h_{SU}||_2^2+\frac{\sigma_w^2}{P\left(\kappa_t+\kappa_r\right)}\right]=1
\end{align}
\end{subequations}
which is currently a SDP problem and can be solved by existing techniques \cite{A.Nemirovski2008(ActaNumerica)}.

{\color{black} However, the matrix $\mathbf{\Xi}$ involves the stochastic phase errors, which are generally unknown due to their randomness and prevent us from predetermining $\mathbf{\Xi}$ and obtaining the solution in reality. In view of this issue, we will further calculate the expectation of $\mathbf{\Xi}$, denoted by $\mathbb{E}_{\mathbf{\Theta}_E}\left[\mathbf{\Xi}\right]$, in order to facilitate the optimization procedure and achieve a statistical average optimization result.}

{\color{black}\subsection{Expectation of $\mathbf{\Xi}$}} 

According to (\ref{eq2-24}), $\mathbb{E}_{\mathbf{\Theta}_E}\left[\mathbf{\Xi}\right]$ can be written as
\begin{equation}\label{eq2-29}
\mathbb{E}_{\mathbf{\Theta}_E}\left[\mathbf{\Xi}\right]=
\mathbb{E}_{\mathbf{v}_E}\left[\mathbf{\Xi}\right]=
\left(\begin{matrix}\mathbf{D}_{IU}\mathbf{D}_{SI}\mathbb{E}_{\mathbf{v}_E}\!\!\left[\mathbf{v}_E\mathbf{v}_E^H\right]\mathbf{D}_{SI}^H\mathbf{D}_{IU}^H&h_{SU}^*\mathbf{D}_{IU}\mathbf{D}_{SI}\mathbb{E}_{\mathbf{v}_E}\!\!\left[\mathbf{v}_E\right]\\\mathbb{E}_{\mathbf{v}_E}\!\!\left[\mathbf{v}_E^H\right]\mathbf{D}_{SI}^H\mathbf{D}_{IU}^Hh_{SU}&0\\\end{matrix}\right)
\end{equation}
where $\mathbf{D}_{SI}=diag(\mathbf{h}_{SI})$ and $\mathbf{v}_E=\left(e^{j\theta_{E1}},e^{j\theta_{E2}},\ldots,e^{j\theta_{EN}}\right)^T$. $\mathbb{E}_{\mathbf{v}_E}\left[\mathbf{v}_E\mathbf{v}_E^H\right]$, which is expressed as
\begin{equation}\label{eq2-30}
\begin{split}
\mathbb{E}_{\mathbf{v}_E}\left[\mathbf{v}_E\mathbf{v}_E^H\right]=
\left(\begin{matrix}\begin{matrix}1&\mathbb{E}_{\delta_{\theta}}\!\!\left[e^{j\theta_{E1}-j\theta_{E2}}\right]\\\mathbb{E}_{\delta_{\theta}}\!\!\left[e^{j\theta_{E2}-j\theta_{E1}}\right]&1\\\end{matrix}&\begin{matrix}\cdots&\mathbb{E}_{\delta_{\theta}}\!\!\left[e^{j\theta_{E1}-j\theta_{EN}}\right]\\\cdots&\mathbb{E}_{\delta_{\theta}}\!\!\left[e^{j\theta_{E2}-j\theta_{EN}}\right]\\\end{matrix}\\\begin{matrix}\vdots&\vdots\\\mathbb{E}_{\delta_{\theta}}\!\!\left[e^{j\theta_{EN}-j\theta_{E1}}\right]&\mathbb{E}_{\delta_{\theta}}\!\!\left[e^{j\theta_{EN}-j\theta_{E2}}\right]\\\end{matrix}&\begin{matrix}\ddots&\ \ \ \ \ \ \ \ \ \vdots\ \ \ \ \ \ \ \ \ \\\cdots&\ \ \ 1\ \ \ \\\end{matrix}\\\end{matrix}\right)
\end{split}
\end{equation}
represents the autocorrelation matrix of $\mathbf{v}_E$, where $\delta_{\theta}=\theta_{Ei}-\theta_{Ej}$ obeys triangular distribution on $[-\pi,\pi]$ as $\theta_{Ei}$ obeys uniform distribution on $\left[-\pi/2,\pi/2\right]$ (detailed in Appendix A). {\color{black} Because $\mathbb{E}_{\delta_{\theta}}\left[e^{j\theta_{Ei}-j\theta_{Ej}}\right]=\mathbb{E}_{\delta_{\theta}}\left[e^{j\delta_{\theta}}\right]=\int_{-\pi}^{\pi}{f\left(\delta_{\theta}\right)e^{j\delta_{\theta}}d\delta_{\theta}}=4/\pi^2$, where $f\left(\delta_{\theta}\right)$, expressed as (\ref{eq-A6}) in Appendix A, is the probability density function of $\delta_{\theta}$, we have
$\mathbb{E}_{\mathbf{v}_E}\left[\mathbf{v}_E\mathbf{v}_E^H\right]=\mathbf{I}_N+\mathbf{J}$, where the $(i,j)$-th element in the matrix $\mathbf{J}$ is expressed as
\begin{equation}
\left[\mathbf{J}\right]_{(i,j)}=\left\{
\begin{matrix}
0,\ \ \ i=j\\
\frac{4}{\pi^2},\ \ i\neq j
\end{matrix}
\right.
\end{equation}

Moreover}, because
$\mathbb{E}_{\mathbf{v}_E}\left[\mathbf{v}_E\right]
=\left(\mathbb{E}_{\theta_{Ei}}\!\!\left[e^{j\theta_{E1}}\right],\mathbb{E}_{\theta_{Ei}}\!\!\left[e^{j\theta_{E2}}\right],\ldots,\mathbb{E}_{\theta_{Ei}}\!\!\left[e^{j\theta_{EN}}\right]\right)^T$
and
$\mathbb{E}_{\theta_{Ei}}\left[e^{j\theta_{Ei}}\right]
=\int_{-\frac{\pi}{2}}^{\frac{\pi}{2}}{f(\theta_{Ei})e^{j\theta_{Ei}}d\theta_{Ei}}
=\int_{-\frac{\pi}{2}}^{\frac{\pi}{2}}{f(\theta_{Ei})(cos\theta_{Ei}+jsin\theta_{Ei})d\theta_{Ei}}
=2/\pi$ for $i=1,2,...,N$,
where $f\left(\theta_{Ei}\right)=1/\pi$ is the probability density function of $\theta_{Ei}$, we have
$\mathbb{E}_{\mathbf{v}_E}\left[\mathbf{v}_E\right]=\left(2/\pi\right)\mathbf{1}^T$.

By substituting {\color{black}$\mathbb{E}_{\mathbf{v}_E}\left[\mathbf{v}_E\mathbf{v}_E^H\right]=\mathbf{I}_N+\mathbf{J}$} and $\mathbb{E}_{\mathbf{v}_E}\left[\mathbf{v}_E\right]=\left(2/\pi\right)\mathbf{1}^T$ into (\ref{eq2-29}), we have
{\color{black}\begin{equation}\label{eq2-35}
\mathbb{E}_{\mathbf{\Theta}_E}\left[\mathbf{\Xi}\right]=\left(\begin{matrix}\mathbf{D}_{IU}\mathbf{D}_{SI}(\mathbf{I}_N+\mathbf{J})\mathbf{D}_{SI}^H\mathbf{D}_{IU}^H&\frac{2}{\pi}h_{SU}^*\mathbf{D}_{IU}\mathbf{D}_{SI}\mathbf{1}^T\\\frac{2}{\pi}\mathbf{1}\mathbf{D}_{SI}^H\mathbf{D}_{IU}^Hh_{SU}&0\\\end{matrix}\right)
\end{equation}

Consequently}, by replacing $\mathbf{\Xi}$ in (P5) with $\mathbb{E}_{\mathbf{\Theta}_E}\left[\mathbf{\Xi}\right]$, we obtain 
\begin{subequations}\label{eq2-44}
\begin{align}
(\mathrm{P6}):\ \mathop{\max}\limits_{\mathbf{Y}\succeq\mathbf{0},\widetilde{\mu}\geq 0}&\frac{1}{\left(\kappa_t+\kappa_r\right)}\times\left[tr(\mathbb{E}_{\mathbf{\Theta}_E}\left[\mathbf{\Xi}\right]\mathbf{Y})+\widetilde{\mu}||h_{SU}||_2^2\right]
\\s.t.\ &tr\left(\mathbf{E}_n\mathbf{Y}\right)=\widetilde{\mu}\alpha^2,\ n=1,2\ldots N
\\&tr\left(\mathbf{E}_{N+1}\mathbf{Y}\right)=\widetilde{\mu}
\\&tr\left(\mathbb{E}_{\mathbf{\Theta}_E}\left[\mathbf{\Xi}\right]\mathbf{Y}\right)+\widetilde{\mu}\left[||h_{SU}||_2^2+\frac{\sigma_w^2}{P\left(\kappa_t+\kappa_r\right)}\right]=1
\end{align}
\end{subequations}
where
$\widetilde{\mu}=\frac{1}{tr(\mathbb{E}_{\mathbf{\Theta}_E}\left[\mathbf{\Xi}\right]\mathbf{X})+||h_{SU}||_2^2+\frac{\sigma_w^2}{P\left(\kappa_t+\kappa_r\right)}}$.

{\color{black}Currently, because the matrix $\mathbb{E}_{\mathbf{\Theta}_E}\left[\mathbf{\Xi}\right]$ in the OBF and constraints only includes the channel coefficients, which can be estimated via existing channel estimation techniques, it is easy for us to configure $\mathbb{E}_{\mathbf{\Theta}_E}\left[\mathbf{\Xi}\right]$, which assists us in completing the phase shift optimization in terms of maximizing the average SNR in the presence of HWI.}

It is remarkable that after (P6) is solved, the $\bm{\theta}^T$ in the $(N+1)$-th row of the $\mathbf{X}=\widetilde{\mu}^{-1}\mathbf{Y}$ in the solution can be extracted to reconstruct $\mathbf{Y}$ based on (\ref{eq2-26}) and $\mathbf{Y}=\widetilde{\mu}\mathbf{X}$. If the reconstructed $\mathbf{Y}$, denoted by $\mathbf{Y}_r$, {satisfies} $\mathbf{Y}_r=\mathbf{Y}$ and $rank(\mathbf{Y}_r)=1$, the $\bm{\theta}^T$ can be regarded as the optimal phase shift vector. As a result, {\color{black}we will test the values in $\mathbf{Y}_r$ and the rank of $\mathbf{Y}_r$ in the simulations in Section VI, in order to investigate whether} the optimal IRS phase shifts can be acquired from $\bm{\theta}^T$ in the $(N+1)$-th row of the $\mathbf{X}=\widetilde{\mu}^{-1}\mathbf{Y}$ in the solution of the relaxed problem.

{\color{black} In addition, after the phase shift optimization process, two possible factors may still remain and influence the performance.
1) Most channel estimation methods suffer from estimation errors, which lead to imperfect CSI of $\mathbf{h}_{IU}$, $\mathbf{h}_{SI}$ and $h_{SU}$. Based on the imperfect CSI, we can only construct an inaccurate $\mathbb{E}_{\mathbf{\Theta}_E}\left[\mathbf{\Xi}\right]$, and then acquire a non-optimal phase shift vector. 
2) Due to the inherent hardware imperfection, synchronization offset and estimation accuracy limit in the real world, the optimized phase values may not be precisely obtained. In this case, we may actually obtain $\widetilde{\bm{\theta}}^T=\bm{\theta}^T\odot\bm{\theta}_p^T$ instead of $\bm{\theta}^T$ after the optimization, where $\bm{\theta}_p=(e^{j\theta_{p1}},e^{j\theta_{p2}},...,e^{j\theta_{pN}})^T$ denotes a residual phase noise vector with $\theta_{pi}$ being the $i$-th random residual phase noise, which may disturb $\bm{\theta}^T$ and affect the optimization performance.  

The performance degradation caused by the aforementioned two factors is worth to be discussed. Thus, we will present the discussions on them in Section VI.}

$\ $

\section{Comparisons with DF Relay}
The DF relay is a conventional active {\color{black}approach} which is also applied for data transmission enhancement in the wireless communication network. Hence, it is {\color{black}necessary} to compare the performance of the IRS with that of the DF relay in the same situation. 
It was already confirmed that the ideal-hardware IRS equipped with a large number of reflecting units could help the wireless communication system provide higher ACR than the ideal-hardware DF relay equipped with one antenna \cite{E.Bjornson2020(WCL)}. {\color{black} However, the comparisons in \cite{E.Bjornson2020(WCL)} were made in consideration of single-antenna DF relay and multiple-unit IRS without HWI. Note that as $N$ grows, the average ACR of the multiple-unit IRS-aided communication system increases, while that of the single-antenna DF relay assisted communication system remains constant under a certain condition. This might be unfair for the DF relay during the comparisons. Therefore, in this section, we will compare the performances of the IRS with $N$ passive reflecting units and the DF relay with $N$ active antennas,} for the purpose of exploring whether the IRS can still possess advantages in ACR {\color{black}and utility} over the {\color{black}multiple-antenna DF relay} when there exists HWI.

{\color{black} Before the comparisons, determining the exact closed-form ACR of the multiple-antenna DF relay assisted communication system with respect to $N$ in the presence of HWI, is a hard nut to crack,}
as the channel coefficients include random phase shifts, which cannot be compensated by the DF relay. {\color{black} Fortunately, we realize that the source-to-relay and the relay-to-destination channels can similarly be regarded as the uplink and downlink channels modelled in \cite{E.Bjornson2014(TIT)}, which assists us in establishing the \textit{closed-form upper bound} of the ACR in relation to $N$ for the multiple-antenna DF relay supported communication system. }

Let {\color{black}$\mathbf{h}_{SR}$, $\mathbf{h}_{RU}$} and $h_{SU}$ denote the source-to-relay, relay-to-destination and source-to-destination channels, respectively. 
For comparing, we assume that {\color{black}$\mathbf{h}_{SR}=\mathbf{h}_{SI}$, $\mathbf{h}_{RU}=\mathbf{h}_{IU}$}, and the receiver noises at the DF relay and the destination terminal have the same variance of $\sigma_w^2$. If the HWI appears at the source transmitter, the DF relay and the destination receiver, {\color{black}according to Eq. (6) and Eq. (2) in \cite{E.Bjornson2014(TIT)},} the signals received by the DF relay and the destination terminal are modelled as
{\color{black}\begin{equation}\label{eq2-47}
\mathbf{y}_{DF}(t)=\mathbf{h}_{SR}\left[\sqrt{P_1}s(t)+\eta_t(t)\right]+\bm{\eta}_{r_{DF}}(t)+\mathbf{w}_{DF}(t)
\end{equation}
and
\begin{equation}\label{eq2-48}
y_{U1}(t)=\mathbf{h}_{RU}\left[\sqrt{P_2}\mathbf{s}(t)+\bm{\eta}_{t_{DF}}(t)\right]+\eta_{r1}(t)+w(t)
\end{equation}
\begin{equation}\label{eq2-49}
y_{U2}(t)=h_{SU}\left[\sqrt{P_1}s(t)+\eta_t(t)\right]+\eta_{r2}(t)+w(t)
\end{equation}
where} $P_1$ and $P_2$ are the transmitting powers of the source and the DF relay under the constraint of $P=\frac{P_1+P_2}{2}$ \cite{E.Bjornson2020(WCL)}; $y_{U1}(t)$ and $y_{U2}(t)$ are the signals received by the destination terminal through channel {\color{black}$\mathbf{h}_{RU}$} and $h_{SU}$, respectively; $\mathbf{w}_{DF}(t)\sim\mathcal{CN}(\mathbf{0},\sigma_w^2\mathbf{I})$ and $w(t)\sim\mathcal{CN}(0,\sigma_w^2)$ are the receiver noises at the DF relay and the destination terminal;
{\color{black} $\eta_t(t)\sim\mathcal{CN}(0,\Upsilon_t)$, $\bm{\eta}_{r_{DF}}(t)\sim\mathcal{CN}(\mathbf{0},\mathbf{V}_{r_{DF}})$, $\bm{\eta}_{t_{DF}}(t)\sim\mathcal{CN}(\mathbf{0},\mathbf{\Upsilon}_{t_{DF}})$, $\eta_{r1}(t)\sim\mathcal{CN}(0,V_{r1})$ and $\eta_{r2}(t)\sim\mathcal{CN}(0,V_{r2})$} are the distortion noises at the source transmitter, the DF-relay receiver, the DF-relay transmitter and the destination receiver, respectively, with {\color{black} $\Upsilon_t$, $\mathbf{V}_{r_{DF}}$, $\mathbf{\Upsilon}_{t_{DF}}$, $V_{r1}$ and $V_{r2}$ given by
\begin{equation}
\Upsilon_t=\kappa_tP_1\mathbb{E}[s(t)s^*(t)]
\end{equation}
\begin{equation}
\mathbf{V}_{r_{DF}}=\kappa_{r_{DF}}P_1\mathbb{E}[s(t)s^*(t)]\times diag(|h_{SR,1}|^2,...,|h_{SR,N}|^2)
\end{equation}
\begin{equation}
\mathbf{\Upsilon}_{t_{DF}}=\kappa_{t_{DF}}P_2\times diag\{\mathbb{E}[s_1(t)s_1^*(t)],...,\mathbb{E}[s_N(t)s_N^*(t)]\}
\end{equation}
\begin{equation}
V_{r1}=\kappa_{r1}P_2\mathbf{h}_{RU}^T\mathbb{E}[\mathbf{s}(t)\mathbf{s}^H(t)]\mathbf{h}_{RU}^*
\end{equation}
\begin{equation}
V_{r2}=\kappa_{r2}P_1 |h_{SU}|^2\mathbb{E}[s(t)s^*(t)]
\end{equation}
where $\mathbf{s}(t)$ denotes the signal transmitted by the DF relay at time $t$; $s_i(t)$, for $i=1,2,...,N$, represents the $i$-th transmit symbol in $\mathbf{s}(t)$, with $\mathbb{E}[s_i(t)s_i^*(t)]=1$;} $\kappa_t$, $\kappa_{r_{DF}}$, $\kappa_{t_{DF}}$, $\kappa_{r1}$ and $\kappa_{r2}$ are the proportionality factors.

For simple {\color{black}analysis}, we consider that
$\kappa_{t_{DF}}=\kappa_t$ and $\kappa_{r1}=\kappa_{r2}=\kappa_{r_{DF}}=\kappa_r$,
as the hardware characteristics of the transceivers in the DF relay are similar to those in the source equipment and the destination terminal. Therefore, {\color{black}referring to Eq. (26), Eq. (27) in \cite{E.Bjornson2014(TIT)}, and Eq. (15) in \cite{J.N.Laneman2004(TIT)},} the {\color{black}upper bound} of the ACR of the multiple-antenna DF relay assisted communication system with HWI is expressed as 
\begin{equation}\label{eq2-55}
R_{HWI}^{DF}(N)=
\frac{1}{2}\min{\left\{\mathfrak{A}(N),\mathfrak{B}(N)\right\}}
\end{equation}
where 
{\color{black}\begin{equation}\label{eq-A}
\mathfrak{A}(N)=\log_2\left(1+\frac{N\mu_{SI}}{\kappa_r\mu_{SI}+N\kappa_t\mu_{SI}+\frac{\sigma_w^2}{P_1} }\right)
\end{equation}
\begin{equation}\label{eq-B}
\mathfrak{B}(N)=\log_2\left(1+\frac{\mu_{SU}}{(\kappa_t+\kappa_r)\mu_{SU}+\frac{\sigma_w^2}{P_1}}+
\frac{N\mu_{IU}}{\kappa_t\mu_{IU}+N\kappa_r\mu_{IU}+\frac{\sigma_w^2}{P_2} }\right)
\end{equation}

Correspondingly, the utility of the multiple-antenna DF relay is expressed as
\begin{equation}\label{eq-DF Utility}
\gamma_{HWI}^{DF}(N)=
\left\{\begin{matrix}
\frac{\kappa_r\mu_{SI}^2+\frac{\sigma_w^2}{P_1}\mu_{SI}}
{2\left(\kappa_r\mu_{SI}+N\kappa_t\mu_{SI}+\frac{\sigma_w^2}{P_1}\right)\left(\kappa_r\mu_{SI}+N\kappa_t\mu_{SI}+\frac{\sigma_w^2}{P_1}+N\mu_{SI}\right)\ln2}, \ \ \mathfrak{A}(N)<\mathfrak{B}(N) \\
\frac{\kappa_t\mu_{IU}^2+\frac{\sigma_w^2}{P_2}\mu_{IU}}
{2\left(1+\frac{\mu_{SU}}{(\kappa_t+\kappa_r)\mu_{SU}+\frac{\sigma_w^2}{P_1}}+
\frac{N\mu_{IU}}{\kappa_t\mu_{IU}+N\kappa_r\mu_{IU}+\frac{\sigma_w^2}{P_2}}\right)
\left(\kappa_t\mu_{IU}+N\kappa_r\mu_{IU}+\frac{\sigma_w^2}{P_2}\right)^2 \ln2}, \ \ \mathfrak{A}(N)>\mathfrak{B}(N)
\end{matrix}\right.
\end{equation}

For analysis convenience and symbol unification, we assume that $\kappa_t+\kappa_r=\kappa$ with $\kappa_t=\kappa_r=\frac{1}{2}\kappa$ \cite{E.Bjornson2014(TIT)}, and} the total transmitting power of the DF relay assisted communication system ($P_1+P_2=2P$) is allocated by $P_1=P_2=P$. {\color{black} Subsequently, in order to investigate whether the IRS is potentially capable of outperforming the DF relay in the presence of HWI, we will compare $R_{HWI}^{DF}(N)$ in (\ref{eq2-55}) with $\overline{R_{HWI}}(N)$ in (\ref{eq2-17}) from the perspective of scaling law, by considering first $N\rightarrow\infty$ and then $P\rightarrow\infty$ in the following \textbf{Lemma 2} and  \textbf{Lemma 3}.
}

{\color{black}
\begin{lemma}
When $N\rightarrow\infty$, we have
\begin{equation}\label{N_inf_ACR_Compare}
\left.\overline{R_{HWI}}(N)\right|_{N\rightarrow\infty}>\left.R_{HWI}^{DF}(N)\right|_{N\rightarrow\infty}
\end{equation}
\begin{equation}\label{N_inf_Utility_Compare}
\left.\gamma_{HWI}(N)\right|_{N\rightarrow\infty}=\left.\gamma_{HWI}^{DF}(N)\right|_{N\rightarrow\infty}=0
\end{equation}
\end{lemma}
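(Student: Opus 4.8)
The plan is to take limits of the closed-form expressions already derived and compare the resulting constants. First I would handle the ACR inequality \eqref{N_inf_ACR_Compare}. From \eqref{R_HWI Upper Bound} in the remarks following \textbf{Theorem 1}, we already have $\left.\overline{R_{HWI}}(N)\right|_{N\rightarrow\infty}=\log_2\left(1+\frac{1}{\kappa_t+\kappa_r}\right)=\log_2\left(1+\frac{1}{\kappa}\right)$, since the dominant $\beta N^2$ terms in numerator and denominator of \eqref{eq2-17} cancel. For the DF side I would push $N\rightarrow\infty$ inside $\mathfrak{A}(N)$ and $\mathfrak{B}(N)$ in \eqref{eq-A}--\eqref{eq-B}: in $\mathfrak{A}(N)$ the ratio $\frac{N\mu_{SI}}{\kappa_r\mu_{SI}+N\kappa_t\mu_{SI}+\sigma_w^2/P_1}\to\frac{1}{\kappa_t}$, and in $\mathfrak{B}(N)$ the first fraction is $N$-independent while the second tends to $\frac{1}{\kappa_r}$, so $\left.\mathfrak{A}(N)\right|_{N\to\infty}=\log_2\!\left(1+\frac{1}{\kappa_t}\right)$ and $\left.\mathfrak{B}(N)\right|_{N\to\infty}=\log_2\!\left(1+\frac{\mu_{SU}}{(\kappa_t+\kappa_r)\mu_{SU}+\sigma_w^2/P_1}+\frac{1}{\kappa_r}\right)$. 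Hence $\left.R_{HWI}^{DF}(N)\right|_{N\to\infty}=\tfrac12\min\{\log_2(1+\tfrac1{\kappa_t}),\ \mathfrak{B}(\infty)\}\le\tfrac12\log_2\!\left(1+\frac{1}{\kappa_t}\right)$. Substituting $\kappa_t=\kappa_r=\kappa/2$ gives the DF limit $\le\tfrac12\log_2\!\left(1+\frac{2}{\kappa}\right)$, which I would then compare against the IRS limit $\log_2\!\left(1+\frac1\kappa\right)$.

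The crux of the argument is therefore the elementary inequality $\log_2\!\left(1+\frac1\kappa\right)>\tfrac12\log_2\!\left(1+\frac2\kappa\right)$, equivalently $\left(1+\frac1\kappa\right)^2>1+\frac2\kappa$, i.e. $1+\frac2\kappa+\frac1{\kappa^2}>1+\frac2\kappa$, which holds strictly for every $\kappa>0$. Since $\left.R_{HWI}^{DF}(N)\right|_{N\to\infty}$ is at most $\tfrac12\log_2(1+\tfrac2\kappa)$ while the IRS limit is exactly $\log_2(1+\tfrac1\kappa)$, the strict inequality \eqref{N_inf_ACR_Compare} follows regardless of which branch of the $\min$ is active. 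I would state this cleanly so that the factor-of-$\tfrac12$ pre-log penalty on the DF relay (from the two-hop half-duplex operation) is visibly the decisive structural reason the IRS wins asymptotically in $N$, independent of the impairment level.

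For the utility statement \eqref{N_inf_Utility_Compare} I would simply inspect the orders of growth. On the IRS side, from \eqref{Utility Expression}, the numerator is $\mathcal{O}(N)$ (it is $\frac{\sigma_w^2}{P}(2\beta N+\lambda)$) while the denominator is a product of two $\mathcal{O}(N^2)$ factors times $\ln 2$, hence $\gamma_{HWI}(N)=\mathcal{O}(N^{-3})\to 0$ — this is exactly remark 3 after \textbf{Theorem 1}. On the DF side, from \eqref{eq-DF Utility}: in the $\mathfrak{A}<\mathfrak{B}$ branch the numerator is constant in $N$ and the denominator is $\mathcal{O}(N^2)$, so the utility is $\mathcal{O}(N^{-2})\to 0$; in the $\mathfrak{A}>\mathfrak{B}$ branch the numerator is again constant, the bracketed ACR factor tends to the finite constant $1+\mu_{SU}/((\kappa_t+\kappa_r)\mu_{SU}+\sigma_w^2/P_1)+1/\kappa_r$, and $\left(\kappa_t\mu_{IU}+N\kappa_r\mu_{IU}+\sigma_w^2/P_2\right)^2=\mathcal{O}(N^2)$, so again $\mathcal{O}(N^{-2})\to 0$. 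Taking limits termwise (each expression is a ratio of polynomials with strictly positive denominators for all $N\ge 1$, so the limits are justified) gives $\left.\gamma_{HWI}(N)\right|_{N\to\infty}=\left.\gamma_{HWI}^{DF}(N)\right|_{N\to\infty}=0$.

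I do not anticipate a genuine obstacle here; the only mild subtlety is the $\min$ in \eqref{eq2-55}, which I dispose of by bounding $R_{HWI}^{DF}$ above by $\tfrac12\mathfrak{A}(\infty)$ alone — that bound is already strictly below the IRS limit, so the value of $\mathfrak{B}(\infty)$ never needs to be compared. The same $\min$ is harmless for the utility claim because both branches are individually $\mathcal{O}(N^{-2})$, so whichever branch is eventually selected, the limit is $0$.
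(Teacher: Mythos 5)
Your proposal is correct and follows essentially the same route as the paper's Appendix B: take the $N\to\infty$ limits of $\mathfrak{A}(N)$, $\mathfrak{B}(N)$ and $\overline{R_{HWI}}(N)$ under $\kappa_t=\kappa_r=\kappa/2$, $P_1=P_2=P$, and reduce the rate comparison to $\log_2(1+1/\kappa)-\tfrac12\log_2(1+2/\kappa)=\tfrac12\log_2\bigl(1+\tfrac{1}{\kappa^2+2\kappa}\bigr)>0$, with the utility limits following from the polynomial orders of the closed-form expressions. The only cosmetic difference is that you bound the $\min$ above by $\tfrac12\mathfrak{A}(\infty)$ rather than identifying (as the paper does) that $\mathfrak{A}(\infty)<\mathfrak{B}(\infty)$ makes that branch active, and you spell out the $\mathcal{O}(N^{-3})$ versus $\mathcal{O}(N^{-2})$ decay that the paper leaves implicit.
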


\begin{proof}
The proof is given in Appendix B.
\end{proof}

\begin{lemma}
When $P\rightarrow\infty$, we have
\begin{equation}\label{P_inf_ACR_Compare}
\left.\overline{R_{HWI}}(N)\right|_{P\rightarrow\infty}>\left.R_{HWI}^{DF}(N)\right|_{P\rightarrow\infty}
\end{equation}
\begin{equation}\label{P_inf_Utility_Compare_IRS}
\left.\gamma_{HWI}(N)\right|_{P\rightarrow\infty}=0
\end{equation}
\begin{equation}\label{P_inf_Utility_Compare_DF}
\left.\gamma_{HWI}^{DF}(N)\right|_{P\rightarrow\infty}
=\frac{\kappa}{(\kappa+N\kappa+2N)(\kappa+N\kappa)\ln 2}
\end{equation}
\end{lemma}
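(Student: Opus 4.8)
The plan is to obtain all three claims by letting $P\rightarrow\infty$ directly inside the closed-form expressions already established, and then settling one elementary inequality. On the IRS side, in $\overline{R_{HWI}}(N)$ from (\ref{eq2-17}) the only $P$-dependent term is $\frac{\sigma_w^2}{P}$ in the denominator, which vanishes; hence $\left.\overline{R_{HWI}}(N)\right|_{P\rightarrow\infty}=\log_2\!\left(1+\frac{1}{\kappa_t+\kappa_r}\right)=\log_2\!\left(1+\frac{1}{\kappa}\right)$, recovering the bound (\ref{R_HWI Upper Bound}). For the IRS utility $\gamma_{HWI}(N)$ from (\ref{Utility Expression}), the whole expression carries an overall factor $\frac{\sigma_w^2}{P}$ while $(2\beta N+\lambda)$ is $P$-free and both bracketed factors in the denominator tend to finite, strictly positive limits (their limits are $(\kappa_t+\kappa_r)(\beta N^2+\lambda N+\mu_{SU})$ and $(\kappa_t+\kappa_r+1)(\beta N^2+\lambda N+\mu_{SU})$, both positive); therefore $\left.\gamma_{HWI}(N)\right|_{P\rightarrow\infty}=0$, which is (\ref{P_inf_Utility_Compare_IRS}).

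On the DF side, I set $P_1=P_2=P$ and let $P\rightarrow\infty$ in (\ref{eq-A}) and (\ref{eq-B}), which annihilates the $\frac{\sigma_w^2}{P_1}$ and $\frac{\sigma_w^2}{P_2}$ terms. With $\kappa_t=\kappa_r=\frac{\kappa}{2}$ this gives $\left.\mathfrak{A}(N)\right|_{P\rightarrow\infty}=\log_2\!\left(1+\frac{2N}{\kappa(1+N)}\right)$ and $\left.\mathfrak{B}(N)\right|_{P\rightarrow\infty}=\log_2\!\left(1+\frac{1}{\kappa}+\frac{2N}{\kappa(1+N)}\right)$. Since $\mathfrak{B}$ exceeds $\mathfrak{A}$ by the extra $\frac{1}{\kappa}$ term, we have $\left.\mathfrak{B}(N)\right|_{P\rightarrow\infty}>\left.\mathfrak{A}(N)\right|_{P\rightarrow\infty}$, so by (\ref{eq2-55}) the limiting DF rate is $\left.R_{HWI}^{DF}(N)\right|_{P\rightarrow\infty}=\frac12\log_2\!\left(1+\frac{2N}{\kappa(1+N)}\right)$.

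This same sign comparison picks out the first branch ($\mathfrak{A}(N)<\mathfrak{B}(N)$) of the piecewise utility (\ref{eq-DF Utility}). Substituting $P_1=P$ and dropping $\frac{\sigma_w^2}{P_1}$ as $P\rightarrow\infty$, the first-branch expression becomes $\frac{\kappa_r\mu_{SI}^2}{2(\kappa_r\mu_{SI}+N\kappa_t\mu_{SI})(\kappa_r\mu_{SI}+N\kappa_t\mu_{SI}+N\mu_{SI})\ln 2}$; cancelling the common $\mu_{SI}^2$ and inserting $\kappa_t=\kappa_r=\frac{\kappa}{2}$ (so $\kappa_r+N\kappa_t=\frac{\kappa(1+N)}{2}$ and $\kappa_r+N\kappa_t+N=\frac{\kappa+N\kappa+2N}{2}$) yields, after the routine arithmetic, $\frac{\kappa}{(\kappa+N\kappa+2N)(\kappa+N\kappa)\ln 2}$, which is (\ref{P_inf_Utility_Compare_DF}).

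It remains to prove (\ref{P_inf_ACR_Compare}), i.e. $\log_2\!\left(1+\frac{1}{\kappa}\right)>\frac12\log_2\!\left(1+\frac{2N}{\kappa(1+N)}\right)$. Squaring (equivalently, using monotonicity of $\log_2$), this reduces to $\left(1+\frac{1}{\kappa}\right)^2>1+\frac{2N}{\kappa(1+N)}$, i.e. $\frac{2}{\kappa}+\frac{1}{\kappa^2}>\frac{2N}{\kappa(1+N)}$; since $\frac{N}{1+N}<1$ gives $\frac{2N}{\kappa(1+N)}<\frac{2}{\kappa}$, the inequality holds for every finite $N>0$ with room to spare, completing the proof. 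I do not expect a genuine obstacle; the only point needing care is confirming which branch of the $\min$ in (\ref{eq2-55}) — and correspondingly which branch of (\ref{eq-DF Utility}) — is active in the $P\rightarrow\infty$ regime, and the persistent $\frac{1}{\kappa_t+\kappa_r}$ term inside $\mathfrak{B}$ settles that unambiguously.
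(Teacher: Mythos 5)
Your proposal is correct and follows essentially the same route as the paper's Appendix C: pass to the limit $P\rightarrow\infty$ in the closed-form expressions, observe that the persistent $\frac{1}{\kappa}$ term forces $\mathfrak{B}(N)>\mathfrak{A}(N)$ so the $\mathfrak{A}$-branch is active in both the $\min$ and the piecewise utility, and then compare the limiting rates (your bound $\frac{2N}{\kappa(1+N)}<\frac{2}{\kappa}$ is just a slightly more elementary packaging of the paper's explicit difference $\frac{1}{2}\log_2\{1+\frac{2\kappa+N+1}{(N+1)\kappa^2+2N\kappa}\}>0$). You also carry out the DF-utility limit arithmetic that the paper leaves implicit, and it checks out.
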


\begin{proof}
The proof is given in Appendix C.
\end{proof}

\textbf{Lemma 2} and  \textbf{Lemma 3} demonstrate that: 1) when $N$ becomes large enough, or when $P$ is sufficiently high, the IRS can surpass the conventional multiple-antenna DF relay in terms of the ACR performance in the presence of HWI. 2) If $N\rightarrow\infty$, both the utilities of the IRS and the multiple-antenna DF relay verge on zero, hinting that adding one more reflecting element on the IRS or one more antenna on the DF relay hardly improves the ACR. 3) When $P$ is nearly infinite, the utility of the IRS is close to zero, while that of the multiple-antenna DF relay converges to a positive value, which indicates that adding one more antenna on the DF relay can still improve the ACR. This is because the IRS is passive and almost useless when $P\rightarrow\infty$, which makes the line-of-sight (LoS) link infinitely strong, while the DF relay is active and consumes power when retransmitting the wireless signals, so that when $P\rightarrow\infty$, each active antenna can always possess positive transmitting power and contribute to the data transmission enhancement. On this point, the multiple-antenna DF relay is more advantageous.
}

Moreover, it can be predicted that the IRS may possibly always outperform the multiple-antenna DF relay when the level of the transceiver HWI is high, because the HWI at the IRS is modelled as a phase error matrix which does not contain $\kappa_t$ or $\kappa_r$, while the HWI at the DF relay involves the distortion noises which contain the two terms. The DF relay may perform worse with higher $\kappa_t+\kappa_r$ while the IRS may maintain the performance due to the fixed uniform distribution of the phase errors. {\color{black}Therefore, we will also derive the interval of $\kappa_t+\kappa_r$, in which the IRS can always surpass the DF relay for all $N>0$ in the following \textbf{Lemma 4}.
}

\begin{lemma}
The IRS will always outperform the {\color{black}multiple-antenna} DF relay for all $N>0$, when $\kappa_t+\kappa_r$ satisfies
{\color{black}\begin{equation}\label{eq2-69}
\kappa_t+\kappa_r>2\sigma_w^4 \left[P^2(\beta+\lambda+\mu_{SU})^2-2\sigma_w^2P(\beta+\lambda+\mu_{SU})\right]^{-1}
=\kappa_{th}
\end{equation}
where $\beta$ and $\lambda$ have been defined in (\ref{eq-revision-1}) and (\ref{eq-revision-2}), respectively.}
\end{lemma}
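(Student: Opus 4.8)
The plan is to sandwich the two rate expressions between bounds that are uniform in $N$ and then compare those bounds. Write $\kappa=\kappa_t+\kappa_r$ and $c=\sigma_w^2/P$, and recall $\chi(N)=\beta N^2+\lambda N+\mu_{SU}$. First, for the DF relay: $R_{HWI}^{DF}(N)=\frac{1}{2}\min\{\mathfrak{A}(N),\mathfrak{B}(N)\}\le\frac{1}{2}\mathfrak{A}(N)$, and with $\kappa_t=\kappa_r=\frac{1}{2}\kappa$, $P_1=P$ the SNR inside $\mathfrak{A}(N)$ obeys $\frac{N\mu_{SI}}{\frac{\kappa}{2}(1+N)\mu_{SI}+c}<\frac{N\mu_{SI}}{\frac{\kappa}{2}N\mu_{SI}}=\frac{2}{\kappa}$ for every integer $N\ge1$; hence $R_{HWI}^{DF}(N)<\frac{1}{2}\log_2\!\big(1+\frac{2}{\kappa}\big)$ for all $N\ge1$. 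Second, for the IRS: since $x\mapsto x/(\kappa x+c)$ is increasing and $\overline{R_{HWI}}(N)$ is nondecreasing in $N$ on the integers $N\ge1$ --- this follows from $\chi(N)-\chi(1)=(N-1)\big[\beta(N+1)+\lambda\big]\ge0$, i.e.\ the same monotonicity invoked in the discussion after Theorem~1 --- we have $\overline{R_{HWI}}(N)\ge\overline{R_{HWI}}(1)=\log_2\!\big(1+\frac{\chi(1)}{\kappa\chi(1)+c}\big)$ with $\chi(1)=\beta+\lambda+\mu_{SU}$.

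It therefore suffices to verify the single scalar inequality $\overline{R_{HWI}}(1)\ge\frac{1}{2}\log_2\!\big(1+\frac{2}{\kappa}\big)$, for then $\overline{R_{HWI}}(N)\ge\overline{R_{HWI}}(1)\ge\frac{1}{2}\log_2(1+\frac{2}{\kappa})>\frac{1}{2}\mathfrak{A}(N)\ge R_{HWI}^{DF}(N)$ for all $N\ge1$, which is the claim. Exponentiating (base $2$), this inequality is equivalent to $\big(\frac{(\kappa+1)\chi(1)+c}{\kappa\chi(1)+c}\big)^2\ge\frac{\kappa+2}{\kappa}$, hence --- multiplying through by $\kappa(\kappa\chi(1)+c)^2>0$ --- to $\kappa\big((\kappa+1)\chi(1)+c\big)^2\ge(\kappa+2)\big(\kappa\chi(1)+c\big)^2$. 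Expanding the two squares, the coefficients of $\chi(1)^2$, $\chi(1)c$ and $c^2$ reduce cleanly to $\kappa$, $-2\kappa$ and $-2$, so the inequality becomes $\kappa\,\chi(1)^2-2\kappa\,\chi(1)c-2c^2\ge0$, i.e.\ $\kappa\ge\frac{2c^2}{\chi(1)^2-2\chi(1)c}$. Substituting back $c=\sigma_w^2/P$ and $\chi(1)=\beta+\lambda+\mu_{SU}$ and clearing $P^2$ turns the right-hand side into exactly $\kappa_{th}$ of (\ref{eq2-69}); thus $\kappa_t+\kappa_r>\kappa_{th}$ is precisely the hypothesis that makes $\overline{R_{HWI}}(1)>\frac{1}{2}\log_2(1+\frac{2}{\kappa})$, and the lemma follows.

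The difficulty here is bookkeeping rather than conceptual: the delicate step is the expansion that reduces the squared-fraction inequality to the quadratic $\kappa\chi(1)^2-2\kappa\chi(1)c-2c^2\ge0$, where the cancellations in the $\chi(1)^2$ and $\chi(1)c$ coefficients must be tracked exactly so that the emerging threshold coincides with $\kappa_{th}$ literally (rather than merely up to a constant). Two mild standing assumptions should also be recorded. One needs $\chi(1)=\beta+\lambda+\mu_{SU}>2\sigma_w^2/P$, equivalently $P^2(\beta+\lambda+\mu_{SU})^2-2\sigma_w^2 P(\beta+\lambda+\mu_{SU})>0$, so that $\kappa_{th}$ is positive and finite --- a high-SNR-type condition, implicit in the statement. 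And one uses $2\beta+\lambda\ge0$ so that $\overline{R_{HWI}}(N)$ is genuinely nondecreasing over $N\ge1$ and the single check at $N=1$ really does dominate all $N$. A sharper, necessary-and-sufficient threshold could be pursued by also bringing in the $\mathfrak{B}(N)$ branch of the minimum, but the $\mathfrak{A}(N)$ bound already yields the stated sufficient condition.
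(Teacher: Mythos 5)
Your proposal is correct and follows essentially the same route as the paper's Appendix D: bound the DF rate above by its $N\to\infty$ limit $\tfrac{1}{2}\log_2(1+2/\kappa)$, bound the IRS rate below by its value at $N=1$ via monotonicity, and reduce the comparison to the single scalar inequality whose expansion yields exactly $\kappa_{th}$. Your explicit bookkeeping of the quadratic expansion and your two recorded standing assumptions (positivity of the denominator in $\kappa_{th}$ and $2\beta+\lambda\ge 0$ for monotonicity) are details the paper leaves implicit, but the argument is the same.
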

\begin{proof}
The proof is given in Appendix D.
\end{proof}

\textbf{Lemma 4} demonstrates that $\kappa_t+\kappa_r$ determines whether the IRS can always outperform the DF relay for all $N>0$ by a threshold $\kappa_{th}$ in (\ref{eq2-69}), {\color{black}which is mainly} decided by $\mu_{SI}$, $\mu_{IU}$, $\mu_{SU}$, $P$ and $\sigma_w^2$. {\color{black} If $P\rightarrow\infty$, we have $\kappa_{th}\rightarrow 0$, which makes (\ref{eq2-69}) always hold and makes the IRS perform better for all $N>0$ and $\kappa_t+\kappa_r>0$ in terms of the ACR. The outcome is consistent with (\ref{P_inf_ACR_Compare}) in \textbf{Lemma 3}.}

\section{Simulation Results}
\subsection{System Setup and Parameter Setting}
This section will numerically elaborate the results of the ACR {\color{black} and the IRS utility} with or without HWI, and compare the performances of the IRS and the DF relay. 
As shown in Figure \ref{Fig-coordinate}, a two-dimensional plane in meters is established to indicate the positions of the source, the IRS and the destination, 
\begin{figure}[!t]
\includegraphics[width=3.6in]{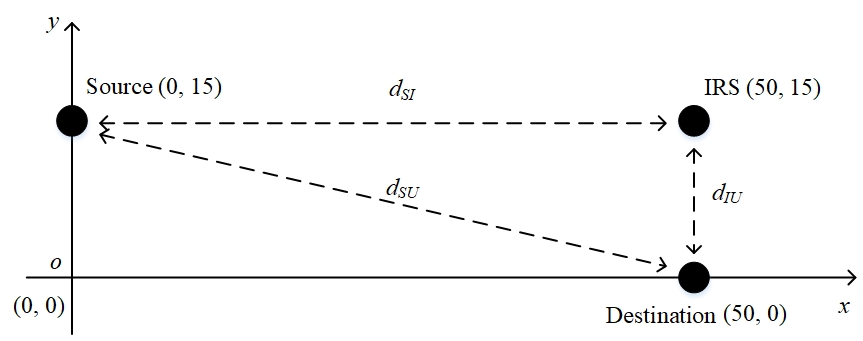}
\centering
\caption{Communication system design in the simulations. Three dashed lines which indicate $d_{SI}$, $d_{IU}$ and $d_{SU}$ constitute a right triangle, where $d_{SU}=\sqrt{d_{SI}^2+d_{IU}^2}$.}
\label{Fig-coordinate}
\end{figure}
which are placed at $(0,15)$, $(50,15)$ and $(50,0)$. Regardless of the height, the distances between the source and the IRS ($d_{SI}$), the IRS and the destination ($d_{IU}$) and the source and the destination ($d_{SU}$) are $d_{SI}=50$ $m$, $d_{IU}=15$ $m$ and $d_{SU}=\sqrt{d_{SI}^2+d_{IU}^2}\approx52.2$ $m$, respectively. According to \cite{M.Cui2019(WCL),E.Bjornson2014(TIT)}, the other parameters are set {\color{black}in Table I. Based on Table I}, the power attenuation coefficients of channel $\mathbf{h}_{IU}$ (or $\mathbf{h}_{RU}$), $\mathbf{h}_{SI}$ (or $\mathbf{h}_{SR}$) and $h_{SU}$ are derived by $\sqrt{\mu_{IU}}=\sqrt{\zeta_0(d_0/d_{IU})^{\alpha_{IU}}}$, $\sqrt{\mu_{SI}}=\sqrt{\zeta_0(d_0/d_{SI})^{\alpha_{SI}}}$ and $\sqrt{\mu_{SU}}=\sqrt{\zeta_0(d_0/d_{SU})^{\alpha_{SU}}}$ \cite{M.Cui2019(WCL)}.

\begin{table}
\renewcommand{\arraystretch}{1.3} 
\caption{{\color{black}Parameter configurations.}}
\label{Table_Parameters}
\centering
{\color{black}
\begin{small} 
\begin{tabular}{ccc}
\hline
Parameters & Definitions & Values\\
\hline
Amplitude Reflection Coefficient & $\alpha$ & $1$\\
Signal Power & $P$ & $20$ dBm\\
Receiver Noise Power & $\sigma_w^2$ & $-80$ dBm\\
Path Loss & $\zeta_0$ & $-20$ dB\\
Reference Distance & $d_0$ & $1$ m\\
Path Loss Exponents & $\alpha_{IU}=\alpha_{SI}=\alpha_{SU}$ & $3$\\
Phase Shift in $\mathbf{h}_{IU}$ & $\varphi_{IU,i}$ & Random in $[0,2\pi]$\\
Phase Shift in $\mathbf{h}_{SI}$ & $\varphi_{SI,i}$ & Random in $[0,2\pi]$\\
Phase Shift in $h_{SU}$ & $\varphi_{SU}$ & $\frac{\pi}{4}$\\
Proportionality Coefficients of Distortion Noises & $\kappa_t=\kappa_r$ & $0.05^2$\\
Oscillator Quality & $\delta$ & $1.58\times 10^{-4}$\\
\hline
\end{tabular}
\end{small}
}
\end{table} 
During the comparisons with DF relay, $d_{SI}$, $d_{IU}$ and $d_{SU}$ are also regarded as the distances between the source and the DF relay, the DF relay and the destination, and the source and the destination, respectively, which still adhere to $d_{SU}=\sqrt{d_{SI}^2+d_{IU}^2}$. The proportionality coefficients can be changed for diverse observations, but still satisfy $\kappa_t=\kappa_r$.

{\color{black}\subsection{Numerical Illustrations for \textbf{Theorem 1} and \textbf{Lemma 1}}}
{\color{black} For further discussing and validating the theoretical analysis in Section III}, we carry out the simulations via the following steps:

\textit{B-Step 1}: We calculate $\overline{R_{HWI}}(N)$ in (\ref{eq2-17}) {\color{black}and $\gamma_{HWI}(N)$ in (\ref{Utility Expression})}, and record the results with HWI from {\color{black} $N=1$ to $N=5000$}.

\textit{B-Step 2}: We calculate $R(N)$ in (\ref{eq2-11}) {\color{black}and $\gamma(N)$ in (\ref{Utility Expression 2})}, and record the results without HWI from {\color{black} $N=1$ to $N=5000$}.

\textit{B-Step 3}: {\color{black}We calculate the rate gap $\delta_R(N)$ in (\ref{eq2-18}) and the utility gap $\delta_{\gamma}(N)$ in (\ref{Utility Degradation}), and record the results from $N=1$ to $N=5000$.}

\textit{B-Step 4}: We calculate and record the numerical results of $R_{HWI}(N)$ in (\ref{eq2-16}) from {\color{black} $N=1$ to $N=5000$}. Due to the randomness of the phase errors generated by the IRS, the ACR is averaged on 1000 Monte Carlo trials {\color{black}every 500 points}.

The average ACRs and {\color{black}IRS utilities} as functions of $N$ from $N=1$ to $N=5000$ are described in Figure \ref{Fig-For_IRS_ACR_and_Utility}. It is indicated that: 1) the experimental results fit well with the theoretical ones from $N=1$ to $N=5000$, which verifies the tightness of (\ref{eq2-17}). 2) The average ACR with HWI is lower and increases more slowly than that without HWI, and the rate gap {\color{black}widens} as $N$ grows. {\color{black}This phenomenon implies that when $N$ grows, the HWI accumulates and begets more severe ACR degradation. 3) When $N$ becomes pretty large, the ACR with HWI verges on $\log_2\left(1+\frac{1}{\kappa_t+\kappa_r}\right)=7.6511$, which testifies the correctness of (\ref{R_HWI Upper Bound}). 4) The IRS utility with HWI is lower than that without HWI, which demonstrates that the HWI reduces the IRS utility as well. Besides, both the IRS utility and the utility gap descend as $N$ grows, which reveals that the influence of the HWI on the IRS utility becomes slighter when $N$ is larger.}

\begin{figure*}[!t]
\centering
\subfloat[]{\includegraphics[width=3.2in]{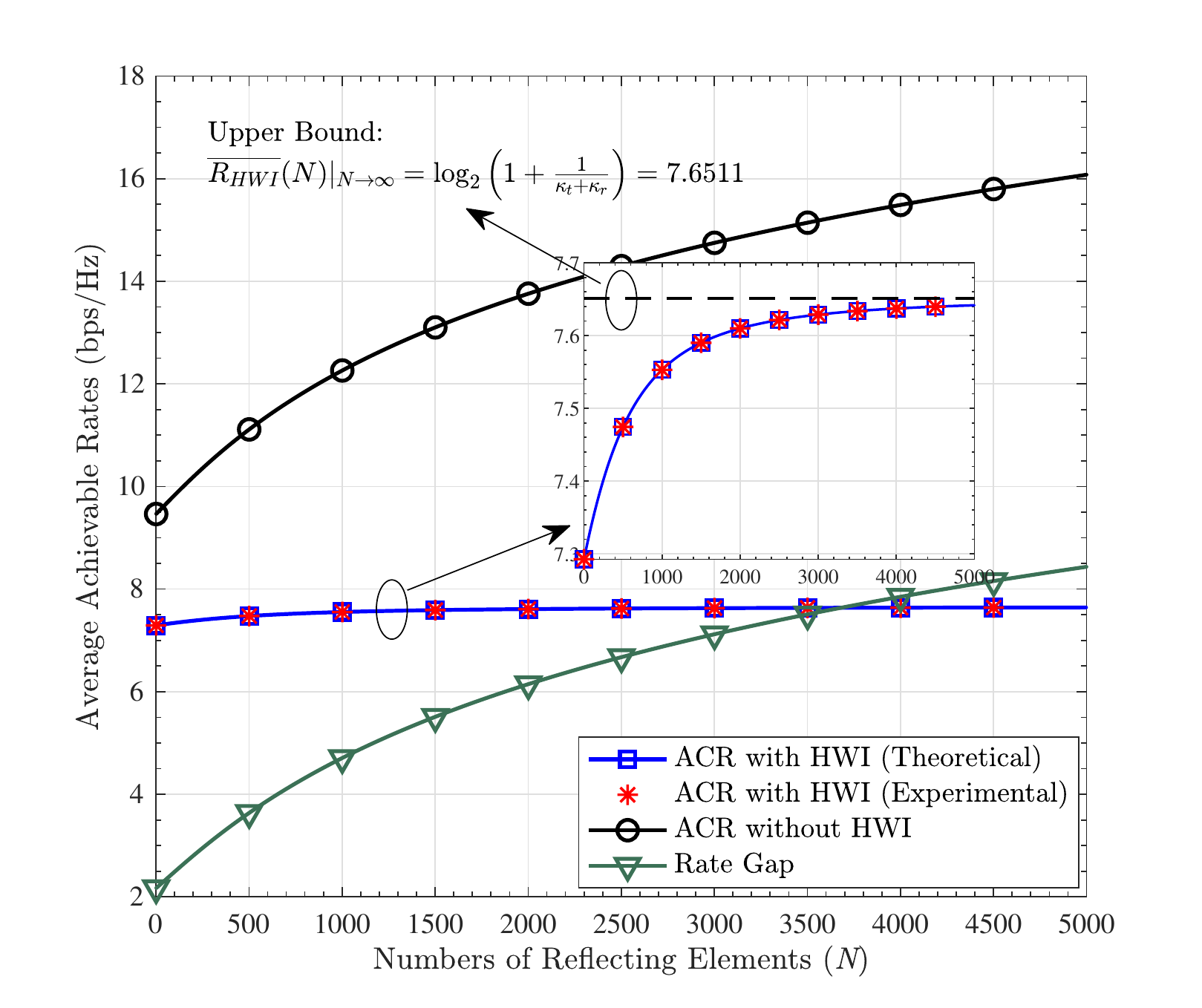}}
\label{For_Average_ACR}
\subfloat[]{\includegraphics[width=3.2in]{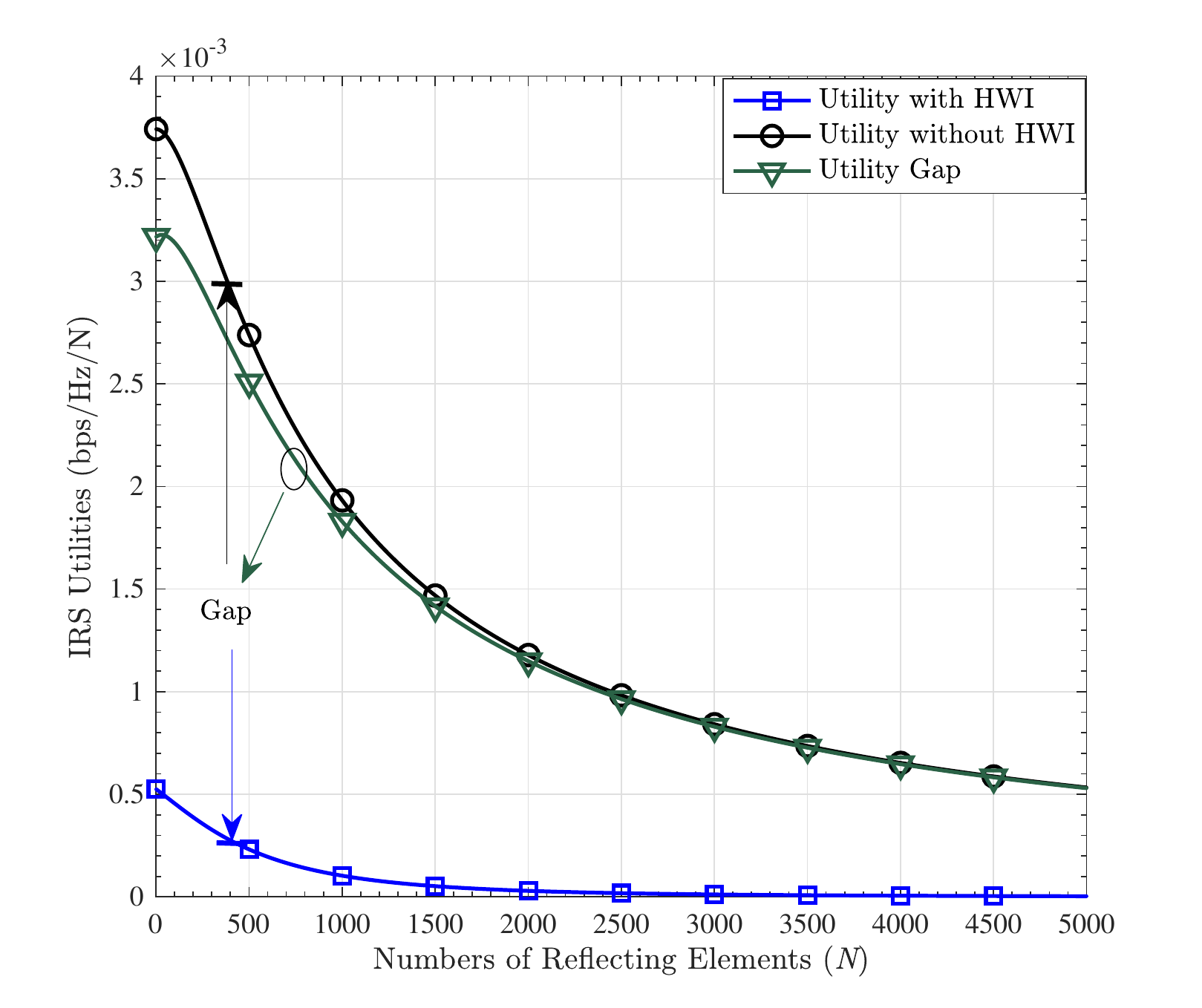}}
\label{For_Utility}
\hfil
\caption{{\color{black} Average ACRs and IRS utilities as functions of $N$ with or without HWI. (a) Average ACRs with respect to $N$, the curves marked with "$\square$", "$\bigcirc$", "$\bigtriangledown$" and "$*$", represent the results obtained in \textit{B-Step 1} to \textit{B-Step 4}, respectively. (b) IRS utilities with respect to $N$, the curves marked with "$\square$", "$\bigcirc$" and "$\bigtriangledown$", represent the results obtained in \textit{B-Step 1} to \textit{B-Step 3}, respectively.}}
\label{Fig-For_IRS_ACR_and_Utility}
\end{figure*}

$\ $

{\color{black} \subsection{Phase Shift Optimization}}
{\color{black}For giving insights into the phase shift optimization approach in Section IV, we carry out the simulations through the following steps:

\textit{C-Step 1}: We solve (P6) by adopting CVX Toolbox with SDPT3 Solver, and obtain the maximum average SNR from the solution of the OBF in (\ref{eq2-44}a). Based on this solution, we calculate and record the ACRs at $N=1,13,25,37$.

\textit{C-Step 2}: We solve (P6) and obtain the optimized matrix $\mathbf{Y}$ and variable $\widetilde{\mu}$. Next, we extract the $\bm{\theta}^T$ in the $(N+1)$-th row of $\mathbf{X}=\widetilde{\mu}^{-1}\mathbf{Y}$. Then, we utilize $\bm{\theta}^T$ to reconstruct $\mathbf{X}$ according to (\ref{eq2-26}) and $\mathbf{Y}$ according to $\mathbf{Y}=\widetilde{\mu}\mathbf{X}$, and denote the reconstructed $\mathbf{X}$ and $\mathbf{Y}$ by $\mathbf{X}_r$ and $\mathbf{Y}_r$, respectively. Finally, we substitute $\mathbf{Y}_r$ into the OBF in (\ref{eq2-44}a) and obtain the average SNR, based on which we calculate and record the ACRs at $N=1,13,25,37$.

\textit{C-Step 3}: Based on the extracted $\bm{\theta}^T$, we obtain the optimized IRS phase shift matrix $\mathbf{\Phi}$ according to $\mathbf{\Phi}=diag(\bm{\theta}^T)$. Then, we substitute $\mathbf{\Phi}$ into (\ref{original ACR with HWI}) and obtain the ACRs with HWI, which are averaged on 1000 Monte Carlo trials at $N=1,13,25,37$.

\begin{figure}[!t]
\includegraphics[width=3.2in]{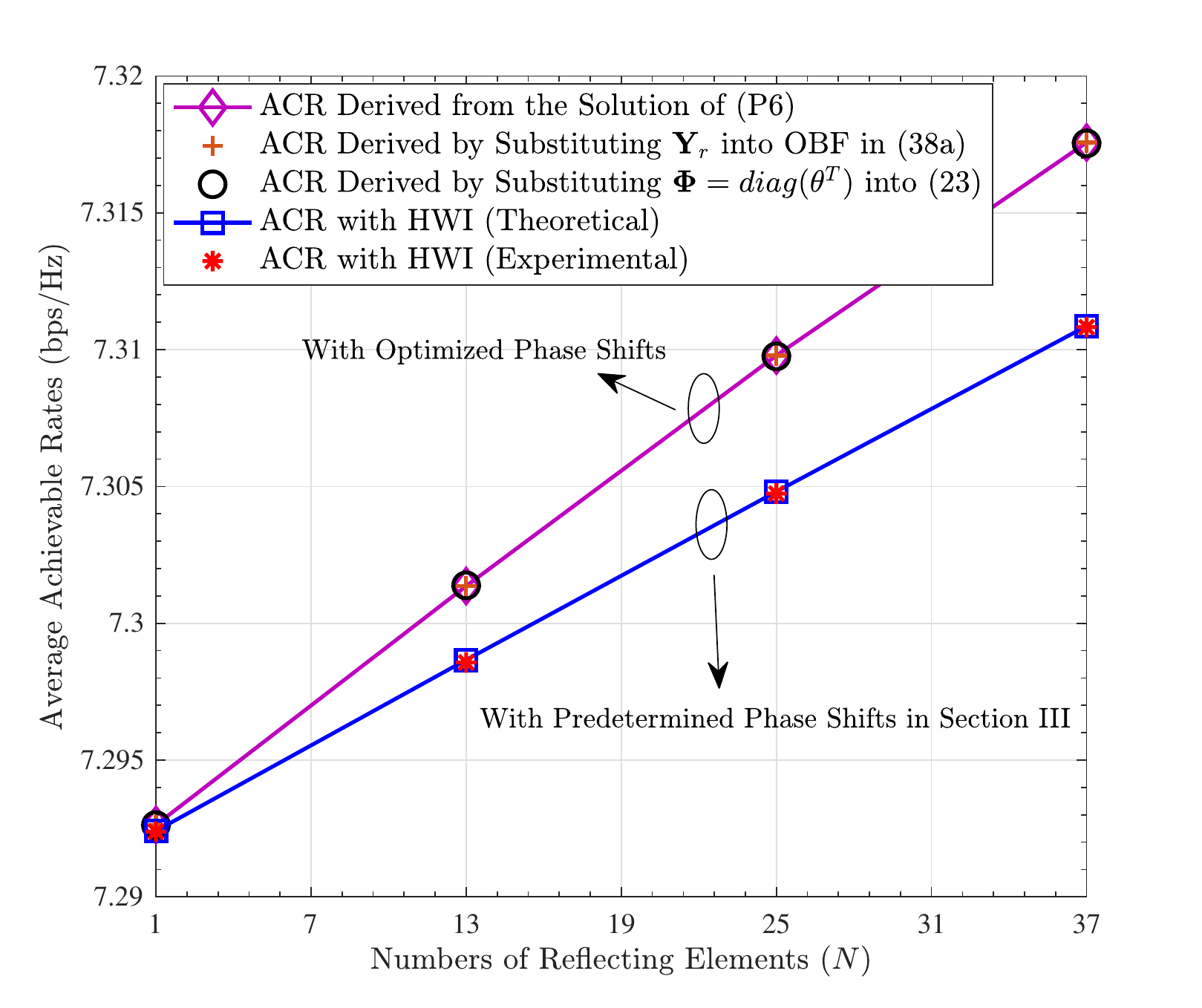}
\centering
\caption{{\color{black}Average ACRs as functions of $N$ with HWI. The curves marked with "$\Diamond$", "+" and "$\bigcirc$" represent the results obtained in \textit{C-Step 1} to \textit{C-Step 3}, respectively. The curves marked with "$\square$" and "$*$" are copied from Figure \ref{Fig-For_IRS_ACR_and_Utility} (a) for comparisons.}}
\label{Fig-Optimization}
\end{figure}

The average ACRs as functions of $N$ with HWI are depicted in Figure \ref{Fig-Optimization}.
Results in Figure \ref{Fig-Optimization} show that:
1) the curves obtained in \textit{C-Step 1} and \textit{C-Step 2} coincide, indicating that $\mathbf{Y}_r=\mathbf{Y}$. Moreover, we calculate the rank of $\mathbf{Y}_r$ and obtain $rank(\mathbf{Y}_r)=1$. Because $\mathbf{Y}_r$ is constructed by $\bm{\theta}^T$ in the $(N+1)$-th row of $\mathbf{X}=\widetilde{\mu}^{-1}\mathbf{Y}$ in the solution, $\bm{\theta}^T$ is testified to be the optimal IRS phase shift vector.
2) The curves obtained in \textit{C-Step 1} and \textit{C-Step 3} coincide, confirming that the mathematical derivations for $\mathbb{E}_{\mathbf{\Theta}_E}\left[\mathbf{\Xi}\right]$ in (\ref{eq2-35}) are correct.
3) The average ACRs with the optimized IRS phase shifts exceed the average ACRs with $\theta_i=-\left(\varphi_{IU,i}+\varphi_{SI,i}\right)$, demonstrating that $\theta_i=-\left(\varphi_{IU,i}+\varphi_{SI,i}\right)$ is not the optimal phase shift as it does not take $h_{SU}$ into account.}

$\ $

{\color{black} \subsection{Discussions on Channel Estimation Errors and Residual Phase Noises}}
{\color{black}Because most IRS-aided communication systems suffer from channel estimation errors, and the optimized IRS phase shifts may generally be affected by residual phase noises, as narrated at the end of Section IV, we will probe into the influence of the two factors on the optimization performance.
The channel estimation errors are set to be additive complex variables according to Eq. (2) in \cite{J.Zhang2020(CL)}, which follow the zero-mean complex Gaussian distribution with the variance of $\sigma_w^2$. More detailed information about the CSI uncertainty models and simulation parameters can be found in \cite{J.Zhang2020(CL)}. The residual phase noises $\theta_{pi}$ in $\bm{\theta}_p$, for $i=1,2,...,N$, are also set to be uniformly distributed on $[-\pi/2,\pi/2]$.

\begin{figure}[!t]
\includegraphics[width=3.2in]{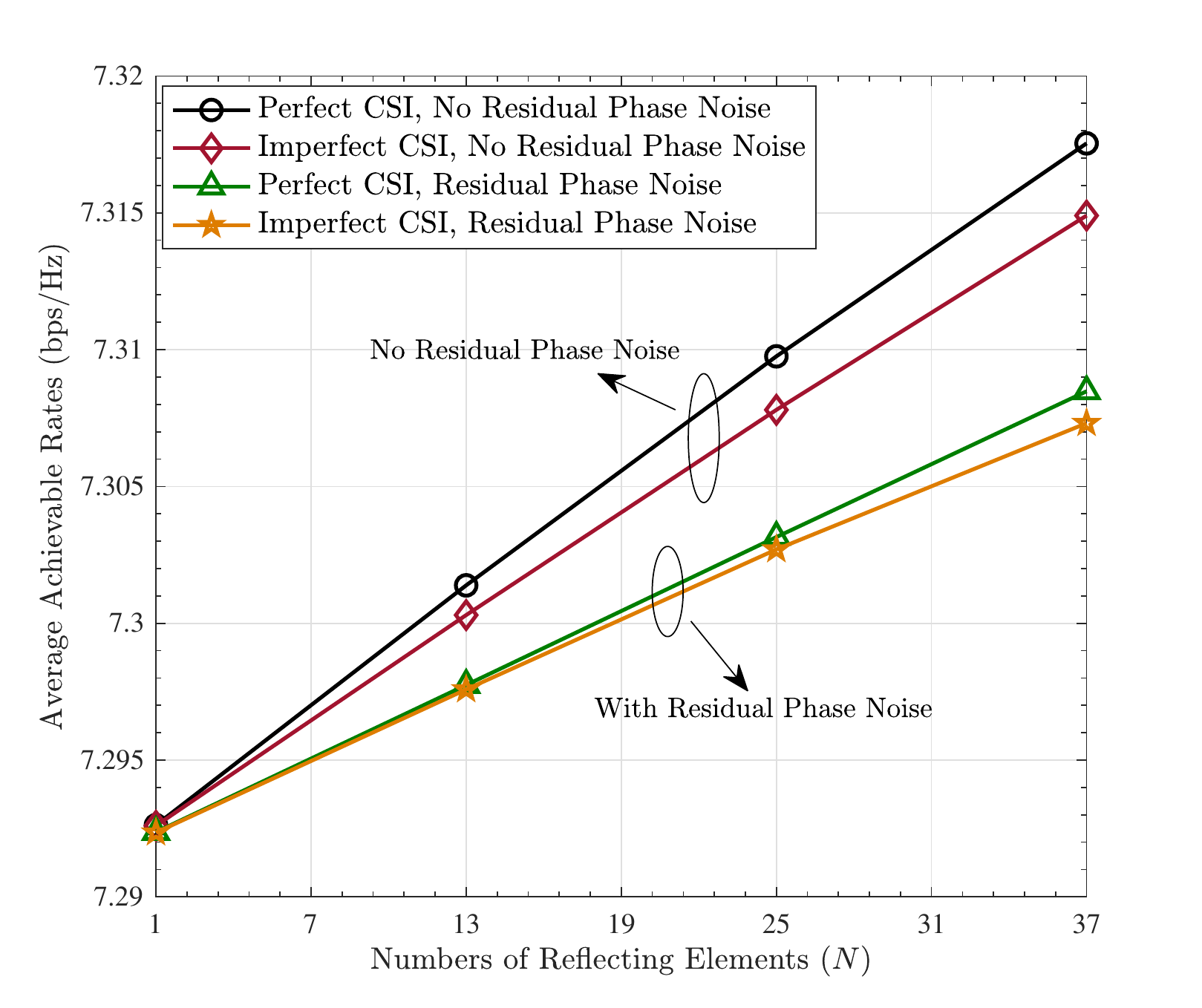}
\centering
\caption{{\color{black} Influences of the channel estimation errors and residual phase noises on the optimization results. The ACRs are derived by substituting $\mathbf{\Phi}=diag(\bm{\theta}^T)$ or $\mathbf{\Phi}=diag(\bm{\theta}^T\odot\bm{\theta}_p^T)$ into (\ref{original ACR with HWI}) and are averaged on 1000 Monte Carlo trials. "Imperfect CSI" means that there are channel estimation errors, while "Perfect CSI" represents the opposite.}}
\label{Fig-For_Optimization_Other_Source_HWI}
\end{figure}

In the simulations, for investigating the average ACR with channel estimation errors, we first adopt the CSI with errors to construct $\mathbb{E}_{\mathbf{\Theta}_E}\left[\mathbf{\Xi}\right]$ and solve (P6), and then substitute $\mathbf{\Phi}=diag(\bm{\theta}^T)$ into (\ref{original ACR with HWI}) which contains the actual CSI. For investigating the average ACR with residual phase noises, we first solve (P6) and exert the influence of $\bm{\theta}_p$ on $\bm{\theta}^T$ by constructing $\bm{\theta}^T\odot\bm{\theta}_p^T$, and then substitute $\mathbf{\Phi}=diag(\bm{\theta}^T\odot\bm{\theta}_p^T)$ into (\ref{original ACR with HWI}).
Figure \ref{Fig-For_Optimization_Other_Source_HWI} depicts the influences of the channel estimation errors and residual phase noises on the optimization results. It is demonstrated that: 1) both the channel estimation errors and the residual phase noises reduce the average ACR and degrade the optimization performance. 2) The residual phase noises impose more serious negative impact on the performance than the channel estimation errors, manifesting that the inherent hardware imperfection, synchronization offset and estimation accuracy limit in the real world, are key potential factors that affect the optimization performance.
}

$\ $

\subsection{Comparisons with DF Relay}

In order to {\color{black}validate} the theoretical {\color{black}analysis} in Section V, we {\color{black}will} numerically compare the ACRs {\color{black}and the utilities} for the IRS-aided and the conventional multiple-antenna DF relay assisted wireless communication systems in the presence of HWI. 
{\color{black} Following Section V, we will compare the performances by varying $N$ and $P$.

$\ $

\textit{1) Comparisons by varying $N$}:

\begin{figure*}[!t]
\centering
\subfloat[]{\includegraphics[width=3.2in]{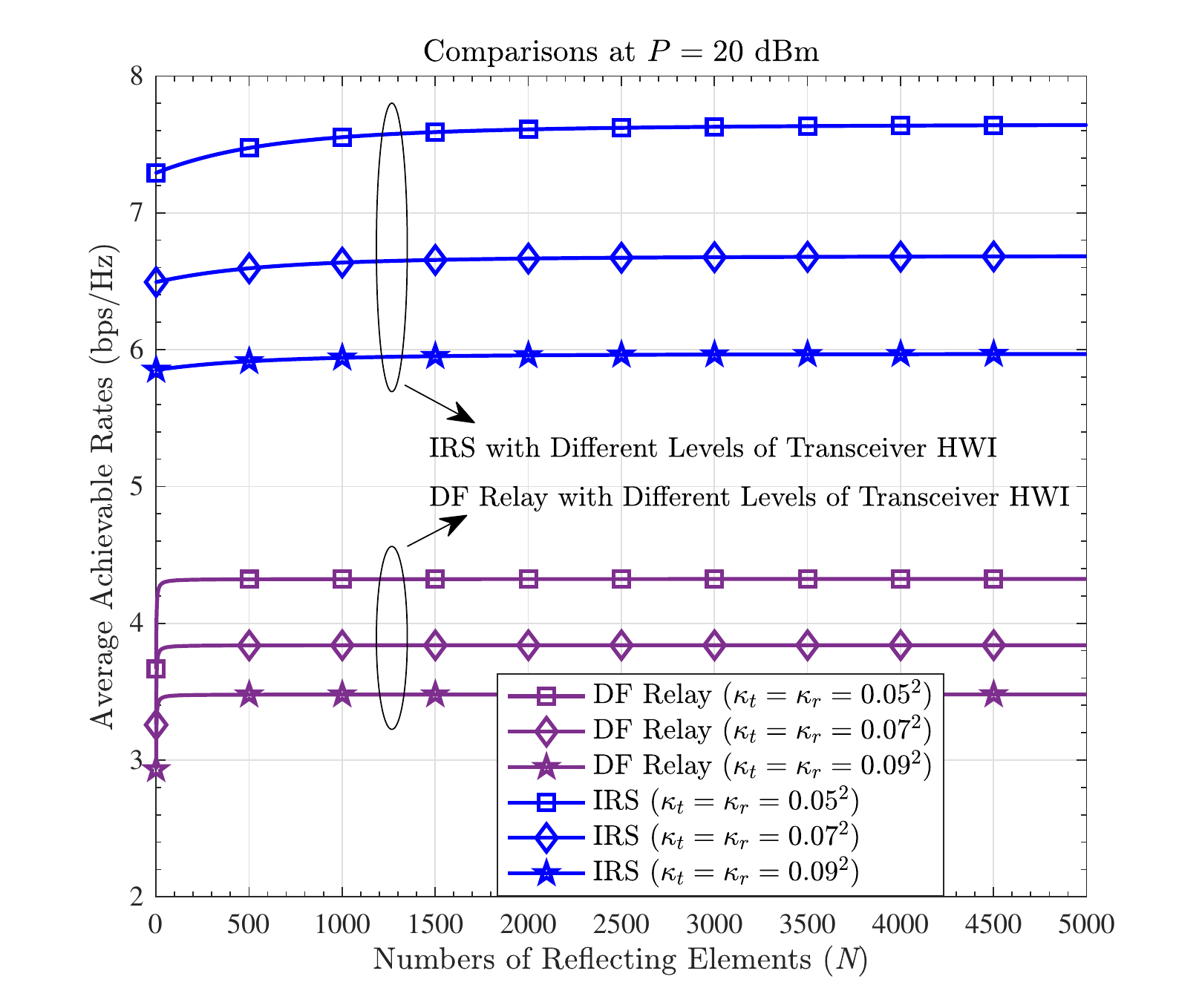}}
\label{Fig-For_N_ACR_Comparison_DF}
\subfloat[]{\includegraphics[width=3.2in]{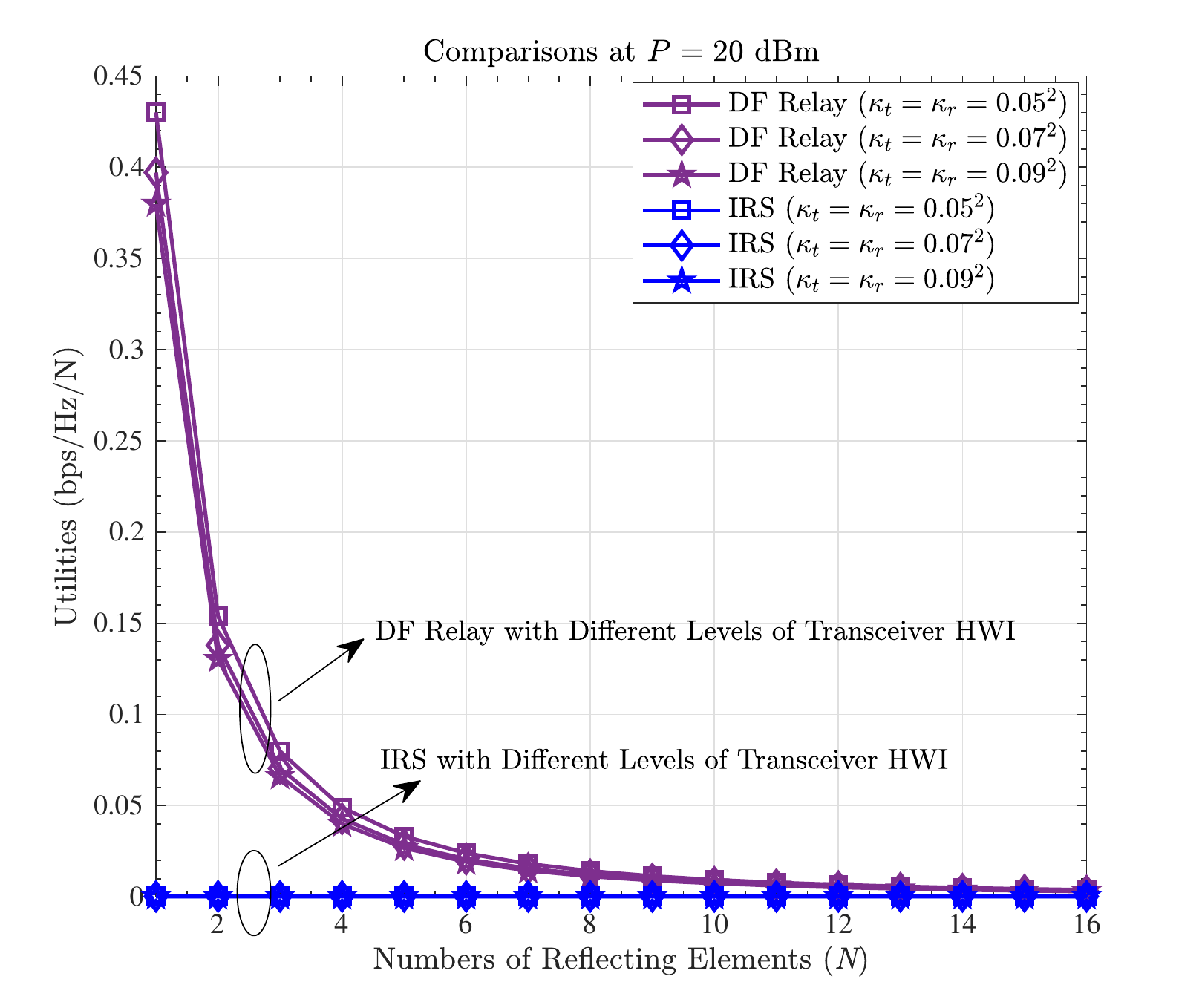}}
\label{Fig-For_N_Utility_Comparison_DF}
\hfil
\caption{{\color{black}Comparisons with DF relay by varying $N$. (a) Average ACRs as functions of $N$, with $P=20$ dBm and $\kappa_t=\kappa_r=0.05^2,0.07^2,0.09^2$. (b) Utilities as functions of $N$, with $P=20$ dBm and $\kappa_t=\kappa_r=0.05^2,0.07^2,0.09^2$.}}
\label{Fig-C1}
\end{figure*}

First, considering the transmitting power to be fixed ($P=20$ dBm), we compare the average ACR in (\ref{eq2-17}) with that in (\ref{eq2-55}) by varying $N$, and observe the simulation results at $\kappa_t=\kappa_r=0.05^2,0.07^2,0.09^2$. Figure \ref{Fig-C1} (a) displays the average ACRs of the IRS-aided and the DF relay assisted wireless communication systems in relation to $N$, from $N=1$ to $N=5000$. It is indicated that: 1) the ACRs decrease when $\kappa_t$ and $\kappa_r$ grow, which verifies that more severe HWI is concomitant with more serious ACR reduction. 2) Although it might not be realistic for the IRS and the DF relay to be equipped with such a large number (e.g. 5000) of reflecting elements and antennas in practical implementations, the result testifies (\ref{N_inf_ACR_Compare}) in \textbf{Lemma 2} and confirms the possibility that when $N$ is extremely large, the IRS is capable of outperforming the DF relay in terms of the ACR performance. 
It is worth noting that when $\kappa_t=\kappa_r=0.05^2,0.07^2,0.09^2$, the IRS always performs better for all $N\in[1,5000]$. This is because with the system parameters set in Section VI-A, the $\kappa_{th}$ in (\ref{eq2-69}) in \textbf{Lemma 4} is computed as $\kappa_{th}=4.0451\times10^{-6}$, which is smaller than $\kappa=\kappa_t+\kappa_r=0.05^2+0.05^2,\ 0.07^2+0.07^2,\ 0.09^2+0.09^2$. 3) As $N$ grows, the ACRs do not continuously increase appreciably. Instead, the ACRs of the IRS-aided communication system are approximately limited by 7.6511 bps/Hz at $\kappa_t=\kappa_r=0.05^2$, 6.6871 bps/Hz at $\kappa_t=\kappa_r=0.07^2$ and 5.9710 bps/Hz at $\kappa_t=\kappa_r=0.09^2$; while those of the DF relay assisted communication system are approximately limited by 4.3237 bps/Hz at $\kappa_t=\kappa_r=0.05^2$, 3.8400 bps/Hz at $\kappa_t=\kappa_r=0.07^2$ and 3.4798 bps/Hz at $\kappa_t=\kappa_r=0.09^2$. The values are consistent with what are computed from $\left.\overline{R_{HWI}}\left(N\right)\right|_{N\rightarrow\infty}=\log_2\left(1+\frac{1}{\kappa_t+\kappa_r}\right)$ and $R_{HWI}^{DF}(N)|_{N\rightarrow\infty}=\frac{1}{2}\log_2\left(1+\frac{2}{\kappa}\right)$. 4) The ACRs of the DF relay assisted communication system increase rapidly when $N<100$, illustrating that the ACR performance of the DF relay can be significantly improved by increasing the quantity of the antennas when $N$ is small.

Then, we compare the utility in (\ref{Utility Expression}) with that in (\ref{eq-DF Utility}) by varying $N$. Figure \ref{Fig-C1} (b) describes the utilities of the IRS and the DF relay in relation to $N$. The observation interval is shrunk ($N\in[1,16]$), for clearly viewing the details on the utility reduction of the DF relay. The results show that: 1) when $\kappa_t$ and $\kappa_r$ grow, the utilities decrease, which confirms that more severe HWI is concomitant with more serious utility degradation. 2) The IRS utility is lower than the DF-relay utility when $N$ is small, and both of them decrease to zero as $N$ grows. This is consistent with what is given in (\ref{N_inf_Utility_Compare}) in \textbf{Lemma 2}.

$\ $

\textit{2) Comparisons by varying $P$}:

\begin{figure*}[!t]
\centering
\subfloat[]{\includegraphics[width=3.2in]{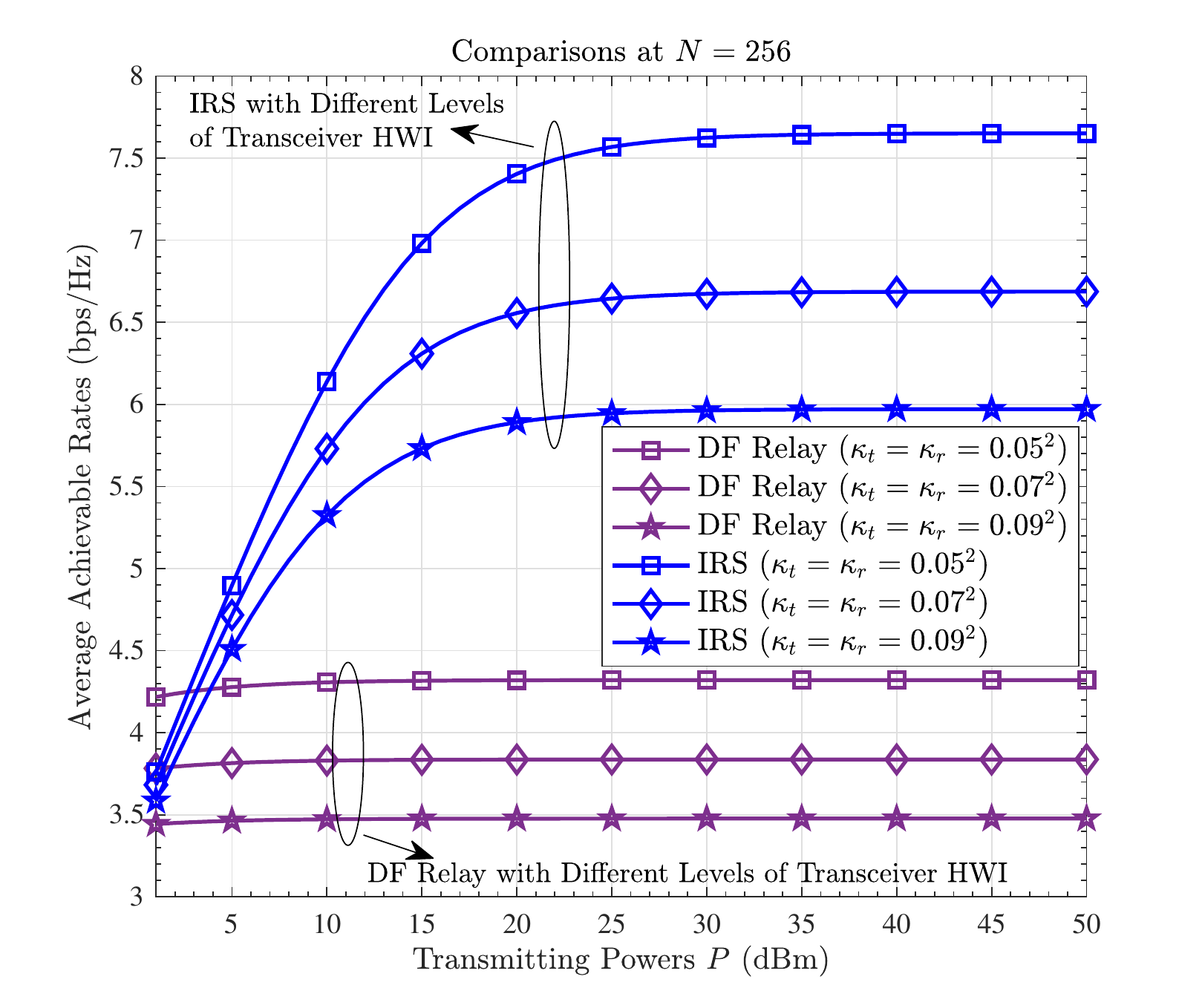}}
\label{Fig-For_P_ACR_Comparison_DF}
\subfloat[]{\includegraphics[width=3.2in]{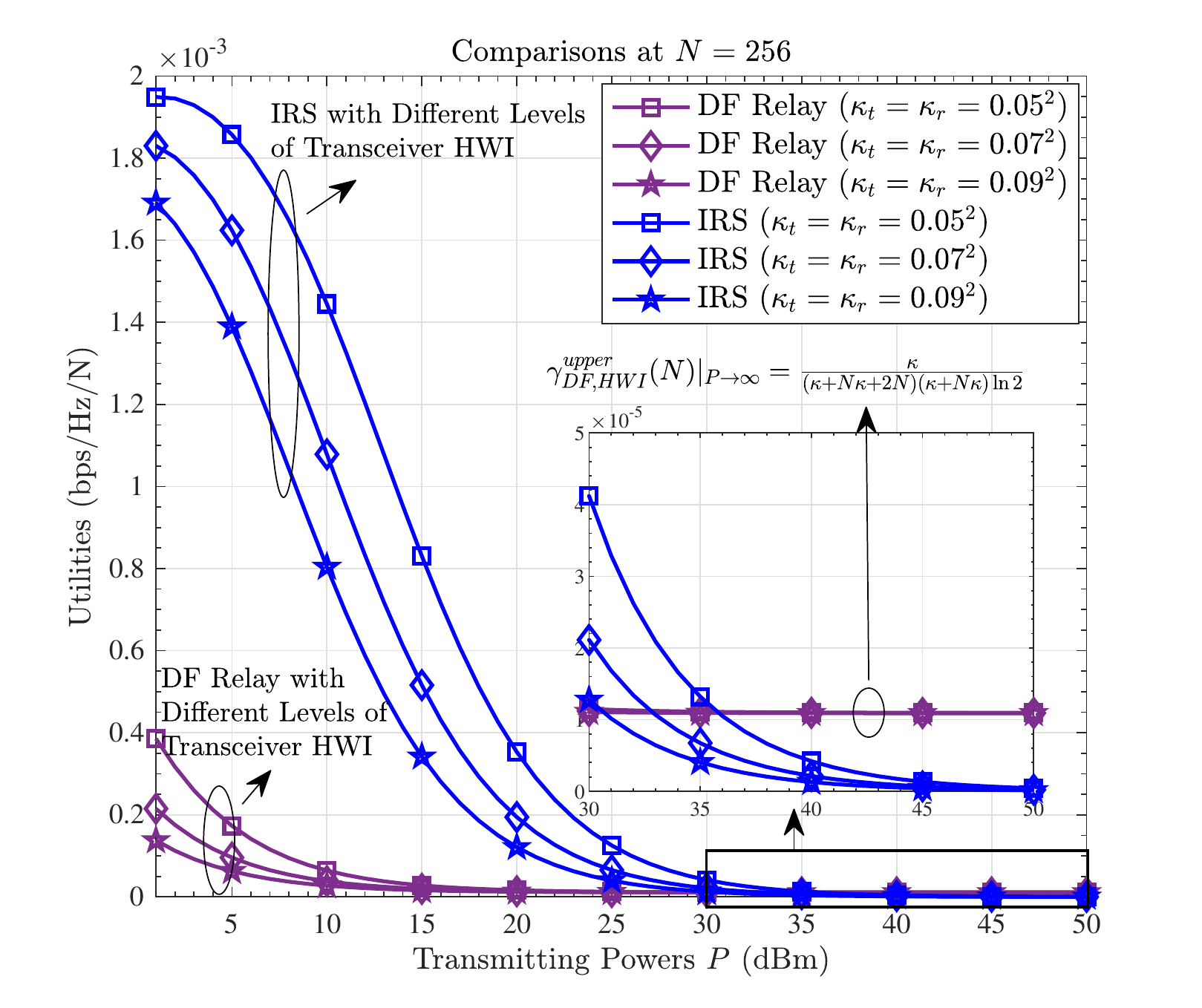}}
\label{Fig-For_P_Utility_Comparison_DF}
\hfil
\caption{{\color{black}Comparisons with DF relay by varying $P$. (a) Average ACRs as functions of $P$, with $N=256$ and $\kappa_t=\kappa_r=0.05^2,0.07^2,0.09^2$. (b) Utilities as functions of $P$, with $N=256$ and $\kappa_t=\kappa_r=0.05^2,0.07^2,0.09^2$.}}
\label{Fig-C2}
\end{figure*}

First, considering the number of the reflecting elements or the DF-relay antennas to be fixed ($N=256$), we compare the average ACR in (\ref{eq2-17}) with that in (\ref{eq2-55}) by varying $P$, and observe the numerical results at $\kappa_t=\kappa_r=0.05^2,0.07^2,0.09^2$. Figure \ref{Fig-C2} (a) plots the average ACRs of the IRS-aided and the DF relay assisted wireless communication systems with respect to $P$, from $P=1$ dBm to $P=50$ dBm. It is indicated that: 1) when $P<2$ dBm and $N=256$, the IRS performs worse than the DF relay if $\kappa_t=\kappa_r=0.05^2,0.07^2$, but better if $\kappa_t=\kappa_r=0.09^2$. This phenomenon reveals that when $P$ is low, the transceiver HWI influences the performance of the DF relay more seriously than the performance of the IRS. 2) The ACRs of the IRS-aided communication system increase faster as $P$ rises, and exceed the ACRs of the DF relay assisted communication system when $P>5$ dBm. 3) Both the ACRs of the IRS-aided and the DF relay assisted communication systems are bounded when $P$ is high. Specifically, the ACRs of the IRS-aided communication system are approximately bounded by 7.6511 bps/Hz at $\kappa_t=\kappa_r=0.05^2$, 6.6871 bps/Hz at $\kappa_t=\kappa_r=0.07^2$ and 5.9710 bps/Hz at $\kappa_t=\kappa_r=0.09^2$; while the ACRs of the DF relay assisted communication system are approximately bounded by 4.3209 bps/Hz at $\kappa_t=\kappa_r=0.05^2$, 3.8372 bps/Hz at $\kappa_t=\kappa_r=0.07^2$ and 3.4770 bps/Hz at $\kappa_t=\kappa_r=0.09^2$. The values coincide with what can be derived from $\overline{R_{HWI}}(N)|_{P\rightarrow\infty}=\log_2\left(1+\frac{1}{\kappa}\right)$ and $R_{HWI}^{DF}(N)|_{P\rightarrow\infty}=\frac{1}{2}\log_2\left(1+\frac{2N}{\kappa+\kappa N} \right)$.

Then, we compare the utility in (\ref{Utility Expression}) with that in (\ref{eq-DF Utility}) by varying $P$. Figure \ref{Fig-C2} (b) depicts the utilities of the IRS and the DF relay with respect to $P$. The results demonstrate that: 1) when $P$ is relatively low, the utilities of the IRS outstrip those of the DF relay, but both of them descend as $P$ grows. 2) The utilities of the IRS decrease to zero, while those of the DF relay converge to certain positive values (around $1.09\times10^{-5}$), which are consistent with what are calculated from $\left.\gamma_{HWI}^{DF}(N)\right|_{P\rightarrow\infty}=\frac{\kappa}{(\kappa+N\kappa+2N)(\kappa+N\kappa)\ln 2}$. These results validate (\ref{P_inf_Utility_Compare_IRS}) and (\ref{P_inf_Utility_Compare_DF}) in \textbf{Lemma 3}, and reveal that in terms of the utility, although the IRS is preferable to the DF relay at a low system power, the DF relay becomes more advantageous if $P$ is significantly high.
}

{\color{black} \section{Conclusion and Future Works}}
In this article, for the purpose of evaluating the performance of the IRS in consideration of {\color{black} the hardware non-ideality at both the IRS and the signal transceivers}, we first analyse the average ACR {\color{black}and the IRS utility} for the IRS-aided SISO communication system, then optimize the IRS phase shifts by converting the original non-convex problem into a SDP problem, {\color{black}subsequently investigate the impact of the channel estimation errors and the residual phase noises on the optimization performance}, and finally compare the IRS with the conventional {\color{black}multiple-antenna} DF relay {\color{black}in terms of the ACR and the utility} in the presence of HWI. The results illustrate that: 1) as the number of the reflecting units grows, the average ACR of the IRS-aided communication system increases, {\color{black}while the utility of the IRS decreases}. 2) The HWI degrades {\color{black}both the ACR and the utility, and it causes more severe ACR reduction when more reflecting elements are equipped.} 3) If the number of the reflecting units is large enough or {\color{black} the transmitting power is sufficiently high}, the IRS can surpass the conventional DF relay in terms of the ACR, {\color{black} although the DF relay is relatively more advantageous in terms of the utility.} Consequently, the IRS is proved to be still an effective facility for data transmission enhancement in the future wireless communication networks with imperfect hardware in the real world.

{\color{black}As in most actual circumstances, the BS is generally equipped with multiple antennas and is responsible for serving multiple users, it is meaningful to dissect the system performance in the IRS-aided multiple-user MISO communication scenario in the presence of HWI. In view of the complex-matrix form of the BS-IRS channel, deriving the closed-form average ACR as a function of the number of the reflecting elements is a challenging task and deserves more effort. In addition, forasmuch as the typical amplify-and-forward (AF) relay is also widely utilized to assist the wireless communication, the insightful theoretical comparison with this conventional approach in the presence of HWI is challenging but worth to be performed in depth as well in the future.}

\appendices
\section{Proof of Theorem 1}
In Appendix A, we will mathematically prove \textbf{Theorem 1} in Section III. 

{\color{black} First, based on (\ref{eq2-16}), the exact average ACR can be derived from
\begin{equation}\label{Exact Ergodic}
\begin{split}
\overline{R_{HWI}}\left(N\right)=&
\mathbb{E}_{\mathbf{\Theta}_E}\left[R_{HWI}\left(N\right)\right]\\=&
\mathbb{E}_{\mathbf{\Theta}_E}\left\{\log_2\left\{1+\frac{P\left|\phi(t)\left(\alpha\mathbf{g}_{IU}^T\mathbf{\Theta}_E\mathbf{g}_{SI}+h_{SU}\right)\right|^2}{P(\kappa_t+\kappa_r)\left|\phi(t)\left(\alpha\mathbf{g}_{IU}^T\mathbf{\Theta}_E\mathbf{g}_{SI}+h_{SU}\right)\right|^2+\sigma_w^2}\right\}\right\}
\end{split}
\end{equation}

However, as illustrated in \cite{D.Li2020(CL)}, it is difficult, if not impossible, to obtain the exact closed-form expression for $\mathbb{E}_{\mathbf{\Theta}_E}\left[R_{HWI}\left(N\right)\right]$.
Therefore, inspired by \cite{D.Li2020(CL), Y.Han2019(TVT), M.Matthaiou2009(TCOM), Q.T.Zhang2005(TWC), S.Sanayei2007(TWC)}, we will also find an approximation to $\mathbb{E}_{\mathbf{\Theta}_E}\left[R_{HWI}\left(N\right)\right]$. 

Fortunately, according to Eq. (35) in \cite{E.Bjornson2013(TCOM)}, which is given by
\begin{equation}\label{eq35_in_[R4.2]}
\mathbb{E}\left\{\log_2\left(1+\frac{x}{y}\right)\right\}\approx
\log_2\left(1+\frac{\mathbb{E}\{x\}}{\mathbb{E}\{y\}}\right)
\end{equation}
a simpler closed-form expression for the average ACR can be achieved by the approximation in (\ref{eq35_in_[R4.2]}). Hence, based on (\ref{eq35_in_[R4.2]}), the $\overline{R_{HWI}}\left(N\right)$ in (\ref{Exact Ergodic}) can be approximated by
\begin{equation}\label{aver_ACR_IT-HWI-approx}
\begin{split}
\overline{R_{HWI}}\left(N\right)\approx&
\log_2\left\{1+\frac{P\mathbb{E}_{\mathbf{\Theta}_E}\left[\left|\phi(t)\left(\alpha\mathbf{g}_{IU}^T\mathbf{\Theta}_E\mathbf{g}_{SI}+h_{SU}\right)\right|^2\right]}{P(\kappa_t+\kappa_r)\mathbb{E}_{\mathbf{\Theta}_E}\left[\left|\phi(t)\left(\alpha\mathbf{g}_{IU}^T\mathbf{\Theta}_E\mathbf{g}_{SI}+h_{SU}\right)\right|^2\right]+\sigma_w^2}\right\} \\
=&\log_2\left(1+\frac{P\mathcal{Q}}{P(\kappa_t+\kappa_r)\mathcal{Q}+\sigma_w^2}\right)
\end{split}
\end{equation}
where $\mathcal{Q}=\mathbb{E}_{\mathbf{\Theta}_E}\left[\left|\phi(t)\left(\alpha\mathbf{g}_{IU}^T\mathbf{\Theta}_E\mathbf{g}_{SI}+h_{SU}\right)\right|^2\right]$.

From (\ref{aver_ACR_IT-HWI-approx}), the problem of deriving the closed-form expression for $\mathbb{E}_{\mathbf{\Theta}_E}\left[R_{HWI}\left(N\right)\right]$ is converted into that for $\mathcal{Q}$, which is expanded into
\begin{equation}\label{Expansion of E}
\begin{split}
\mathcal{Q}=&
\mathbb{E}_{\mathbf{\Theta}_E}\left\{\left[\phi(t)\left(\alpha\mathbf{g}_{IU}^T\mathbf{\Theta}_E\mathbf{g}_{SI}+h_{SU}\right)\right]^*\left[\phi(t)\left(\alpha\mathbf{g}_{IU}^T\mathbf{\Theta}_E\mathbf{g}_{SI}+h_{SU}\right)\right]\right\} \\
=&\mathbb{E}_{\mathbf{\Theta}_E}\left[\left(\alpha\mathbf{g}_{IU}^T\mathbf{\Theta}_E\mathbf{g}_{SI}+h_{SU}\right)^*\left(\alpha\mathbf{g}_{IU}^T\mathbf{\Theta}_E\mathbf{g}_{SI}+h_{SU}\right)\right]
\end{split}
\end{equation}

Subsequently,}
let $\mathbf{G}_{IU}$ and $\mathbf{v}_{E}$ be defined by $\mathbf{G}_{IU}=diag\left(\mathbf{g}_{IU}\right)=\sqrt{\mu_{IU}}\mathbf{I}_N$ and $\mathbf{v}_{E}=\left(e^{j\theta_{E1}},e^{j\theta_{E2}},\ldots,e^{j\theta_{EN}}\right)^T$. Because we have $\mathbf{v}_{E}^T\mathbf{G}_{IU}=\mathbf{g}_{IU}^T\mathbf{\Theta}_E$, $\mathbf{G}_{IU}^*=\mathbf{G}_{IU}$ and $\mathbf{g}_{SI}^*=\mathbf{g}_{SI}$, from (\ref{Expansion of E}) we obtain
\begin{small}
\begin{equation}\label{Convert Q}
\begin{split}
\mathcal{Q}&=\mathbb{E}_{\mathbf{\Theta}_E}\left[{\alpha^2\mathbf{g}}_{SI}^T\mathbf{G}_{IU}^T\mathbf{v}_{E}^*\mathbf{v}_{E}^T\mathbf{G}_{IU}\mathbf{g}_{SI}+\alpha\mathbf{g}_{SI}^T\mathbf{G}_{IU}^T\mathbf{v}_{E}^*h_{SU}+\alpha h_{SU}^*\mathbf{v}_{E}^T\mathbf{G}_{IU}\mathbf{g}_{SI}+||h_{SU}||_2^2\right] \\
&=\alpha^2\mu_{IU}\mu_{SI}\mathbb{E}_{\theta_{Ei}}\left[tr\left(\mathbf{v}_{E}^T\mathbf{\Gamma}_N\mathbf{v}_{E}^*\right)\right]+\alpha\sqrt{\mu_{IU}\mu_{SI}\mu_{SU}}\mathbb{E}_{\theta_{Ei}}\left\{\sum_{i=1}^{N}\left[e^{j{{(\varphi}_{SU}+\theta}_{Ei})}+e^{-j{{(\varphi}_{SU}+\theta}_{Ei})}\right]\right\}+||h_{SU}||_2^2
\end{split}
\end{equation}
\end{small}
where $i=1,2,...,N$. In (\ref{Convert Q}), we can expand $tr\left(\mathbf{v}_{E}^T\mathbf{\Gamma}_N\mathbf{v}_{E}^*\right)$ into 
\begin{small}
\begin{equation}\label{eq-A3}
\begin{split}
tr\left(\mathbf{v}_{E}^T\mathbf{\Gamma}_N\mathbf{v}_{E}^*\right)&=N+\sum_{i\neq1}^{N}e^{j{{(\theta}_{E1}-\theta}_{Ei})}+\sum_{i\neq2}^{N}e^{j{{(\theta}_{E2}-\theta}_{Ei})}+\ldots+\!\!\sum_{i\neq N-1}^{N}e^{j{{(\theta}_{E\left(N-1\right)}-\theta}_{Ei})}+\sum_{i\neq N}^{N}e^{j{{(\theta}_{EN}-\theta}_{Ei})}\\
&=N+2\sum_{i=2}^{N}{cos{{(\theta}_{E1}-\theta}_{Ei})}+2\sum_{i=3}^{N}{cos{{(\theta}_{E2}-\theta}_{Ei})}+\ldots+2\sum_{i=N}^{N}{cos{{(\theta}_{E\left(N-1\right)}-\theta}_{Ei})}\\
&=N+{\mathbf{1M1}}^T
\end{split}
\end{equation} 
\end{small}
where the matrix $\mathbf{M}$ is expressed as
\begin{small}
\begin{equation}\label{eq-A4}
\mathbf{M}=\left(\begin{matrix}2\cos{\left({\theta_{E1}-\theta}_{E2}\right)}&2\cos{\left({\theta_{E2}-\theta}_{E3}\right)}&\begin{matrix}\cdots\ &2\cos{\left({\theta_{E\left(N-1\right)}-\theta}_{EN}\right)}\\\end{matrix}\\2\cos{\left({\theta_{E1}-\theta}_{E3}\right)}&2\cos{\left({\theta_{E2}-\theta}_{E4}\right)}&\begin{matrix}\cdots\ \ \ \ \ \ &\ \ \ \ 0\ \ \ \ \ \ \ \ \ \ \ \ \ \ \ \ \ \\\end{matrix}\\\begin{matrix}\vdots\\\begin{matrix}2\cos{\left({\theta_{E1}-\theta}_{E\left(N-1\right)}\right)}\\2\cos{\left({\theta_{E1}-\theta}_{EN}\right)}\\\end{matrix}\\\end{matrix}&\begin{matrix}\vdots\\\begin{matrix}2\cos{\left({\theta_{E2}-\theta}_{EN}\right)}\\0\\\end{matrix}\\\end{matrix}&\begin{matrix}\begin{matrix}\iddots\ \ \ \ \ \ \ \ \ \ &\ \vdots\ \ \ \ \ \ \ \ \ \ \ \ \ \ \ \ \ \\\end{matrix}\\\begin{matrix}0&\ \ \ \ \ \ \ \ \ \ \ 0\ \ \ \ \ \ \ \ \ \ \ \ \ \ \ \ \\\end{matrix}\\\begin{matrix}0&\ \ \ \ \ \ \ \ \ \ \ 0\ \ \ \ \ \ \ \ \ \ \ \ \ \ \ \ \\\end{matrix}\\\end{matrix}\\\end{matrix}\right)
\end{equation}
\end{small}

We can also utilize Euler formula to expand $\sum_{i=1}^{N}\left[e^{j{{(\varphi}_{SU}+\theta}_{Ei})}+e^{-j{{(\varphi}_{SU}+\theta}_{Ei})}\right]$ and then obtain
$\sum_{i=1}^{N}\left[e^{j{{(\varphi}_{SU}+\theta}_{Ei})}+e^{-j{{(\varphi}_{SU}+\theta}_{Ei})}\right]=2\sum_{i=1}^{N}\cos{\left({\varphi_{SU}+\theta}_{Ei}\right)}$.

As $\theta_{Ei}$, for $i=1,2,...,N$, are random variables which are uniformly distributed on $\left[-\pi/2,\pi/2\right]$, we should calculate the expectations of $2\sum_{i=1}^{N}\cos{\left({\varphi_{SU}+\theta}_{Ei}\right)}$ and $tr\left(\mathbf{v}_{E}^T\mathbf{\Gamma}_N\mathbf{v}_{E}^*\right)$ in order to obtain a statistical average ACR. First, we calculate $\mathbb{E}_{\theta_{Ei}}\left[2\sum_{i=1}^{N}\cos{\left({\varphi_{SU}+\theta}_{Ei}\right)}\right]$ and have
\begin{small}
\begin{equation}\label{eq-A5}
\begin{split}
\mathbb{E}_{\theta_{Ei}}\!\!\left[2\sum_{i=1}^{N}\cos{\left({\varphi_{SU}+\theta}_{Ei}\right)}\right]
&=2\mathbb{E}_{\theta_{Ei}}\!\!\left[\sum_{i=1}^{N}{\cos{\varphi_{SU}}\cos{\theta_{Ei}}}-\sum_{i=1}^{N}{\sin{\varphi_{SU}}\sin{\theta_{Ei}}}\right]\\
&=2N\!\cos{\varphi_{SU}}\!\!\int_{-\frac{\pi}{2}}^{\frac{\pi}{2}}\!\!\!{f\left(\theta_{Ei}\right)\cos{\theta_{Ei}{d\theta}_{Ei}}}-2N\sin{\varphi_{SU}}\int_{-\frac{\pi}{2}}^{\frac{\pi}{2}}{f\left(\theta_{Ei}\right)\sin{\theta_{Ei}{d\theta}_{Ei}}}\\
&=\frac{4}{\pi}N\cos{\varphi_{SU}}
\end{split}
\end{equation}
\end{small}where $f\left(\theta_{Ei}\right)=1/\pi$ is the probability density function of variable $\theta_{Ei}$.

Subsequently, we calculate $\mathbb{E}_{\theta_{Ei}}\left[tr\left(\mathbf{v}_{E}^T\mathbf{\Gamma}_N\mathbf{v}_{E}^*\right)\right]=N+\mathbb{E}_{\theta_{Ei}}\left[{\mathbf{1M1}}^T\right]$. It is notable that the elements in $\mathbf{M}$ are either 0, or $2\cos(\theta_{Ei}-\theta_{Ej})$ for $i<j$. Therefore, let $\delta_{\theta}$ be defined by $\delta_{\theta}=\theta_{Ei}-\theta_{Ej}$. Because $\theta_{Ei}$ obeys uniform distribution on $\left[-\pi/2,\pi/2\right]$, $\delta_{\theta}$ obeys triangular distribution on $[-\pi,\pi]$ whose probability density function is expressed as
\begin{equation}\label{eq-A6}
f\left(\delta_{\theta}\right)=\left\{\begin{matrix}\frac{1}{\pi^2}\delta_{\theta}+\frac{1}{\pi},\ \delta_{\theta}\in\left[-\pi,0\right]\\-\frac{1}{\pi^2}\delta_{\theta}+\frac{1}{\pi},\ \delta_{\theta}\in\left[0,\pi\right]\\\end{matrix}\right.
\end{equation}
Thus, we have
{\color{black} \begin{small}
\begin{equation}\label{eq-A7}
\begin{split}
N+\mathbb{E}_{\theta_{Ei}}\left[{\mathbf{1M1}}^T\right]&=N+\mathbb{E}_{\theta_{Ei}}\left[2\sum_{i<j}^{N}\cos{\left({\theta_{Ei}-\theta}_{Ej}\right)}\right]\\
&=N+N\left(N-1\right)\left[\int_{-\pi}^{0}{\left(\frac{1}{\pi^2}\delta_{\theta}+\frac{1}{\pi}\right)\cos{\left(\delta_{\theta}\right)}d\delta_{\theta}}+\int_{0}^{\pi}{\left(-\frac{1}{\pi^2}\delta_{\theta}+\frac{1}{\pi}\right)\cos{\left(\delta_{\theta}\right)}d\delta_{\theta}}\right]\\
&=N+\frac{1}{\pi^2}N(N-1)\left[\int_{-\pi}^0\delta_{\theta}\cos{(\delta_{\theta})}d\delta_{\theta}-\int_0^{\pi}\delta_{\theta}\cos{(\delta_{\theta})}d\delta_{\theta}\right]\\
&=N+\frac{4}{\pi^2}N(N-1)=\frac{4N^2}{\pi^2}+\left(1-\frac{4}{\pi^2}\right)N
\end{split}
\end{equation}
\end{small}}

By substituting (\ref{eq-A5}) and (\ref{eq-A7}) into (\ref{Convert Q}), and substituting (\ref{Convert Q}) into (\ref{aver_ACR_IT-HWI-approx}), we finally prove (\ref{eq2-17}). {\color{black} Then, by calculating $\gamma_{HWI}(N)=\frac{\partial\overline{R_{HWI}}\left(N\right)}{\partial N}$, we finally prove (\ref{Utility Expression}).}

\section{Proof of Lemma 2}
In Appendix B, we will prove \textbf{Lemma 2} in Section V. 
On the assumption that $\kappa_t=\kappa_r=\frac{1}{2}\kappa$ and $P_1=P_2=P$, {\color{black}when $N\rightarrow\infty$, from (\ref{eq-A}) and (\ref{eq-B}), we have
\begin{equation}
\mathfrak{A}(N)|_{N\rightarrow\infty}=\log_2\left(1+\frac{2}{\kappa}\right)
\end{equation}
\begin{equation}
\mathfrak{B}(N)|_{N\rightarrow\infty}=\log_2\left(1+\frac{1}{\kappa+\frac{\sigma_w^2}{P\mu_{SU}}}+\frac{2}{\kappa}\right)
\end{equation}

Because $1+\frac{2}{\kappa}<1+\frac{1}{\kappa+\frac{\sigma_w^2}{P\mu_{SU}}}+\frac{2}{\kappa}$, according to (\ref{eq2-55}), we have 
\begin{equation}
R_{HWI}^{DF}(N)|_{N\rightarrow\infty}=\frac{1}{2}\mathfrak{A}(N)|_{N\rightarrow\infty}=\frac{1}{2}\log_2\left(1+\frac{2}{\kappa}\right)
\end{equation}

Given $\overline{R_{HWI}}(N)|_{N\rightarrow\infty}$ in (\ref{R_HWI Upper Bound}), we calculate $\overline{R_{HWI}}(N)|_{N\rightarrow\infty}-R_{HWI}^{DF}(N)|_{N\rightarrow\infty}$ and obtain
\begin{equation}
\overline{R_{HWI}}(N)|_{N\rightarrow\infty}-R_{HWI}^{DF}(N)|_{N\rightarrow\infty}
=\frac{1}{2}\log_2\left(1+\frac{1}{\kappa^2+2\kappa}\right)>0
\end{equation}
from which we prove (\ref{N_inf_ACR_Compare}). Then, based on (\ref{Utility Expression}) and (\ref{eq-DF Utility}), we consider $N\rightarrow\infty$ and prove (\ref{N_inf_Utility_Compare}).
}

\section{Proof of Lemma 3}
In Appendix C, we will prove \textbf{Lemma 3} in Section V. 
Similar to the proof of \textbf{Lemma 2} in Appendix B, {\color{black} when $P\rightarrow\infty$, we have
\begin{equation}
\mathfrak{A}(N)|_{P\rightarrow\infty}=\log_2\left(1+\frac{2N}{\kappa+\kappa N} \right)
\end{equation}
\begin{equation}
\mathfrak{B}(N)|_{P\rightarrow\infty}=\log_2\left(1+\frac{1}{\kappa}+\frac{2N}{\kappa+\kappa N} \right)
\end{equation}

Because $1+\frac{2N}{\kappa+\kappa N}<1+\frac{1}{\kappa}+\frac{2N}{\kappa+\kappa N}$, according to (\ref{eq2-55}), we have
\begin{equation}
R_{HWI}^{DF}(N)|_{P\rightarrow\infty}=\frac{1}{2}\mathfrak{A}(N)|_{P\rightarrow\infty}=\frac{1}{2}\log_2\left(1+\frac{2N}{\kappa+\kappa N} \right)
\end{equation}

Based on (\ref{eq2-17}), we have
\begin{equation}
\overline{R_{HWI}}(N)|_{P\rightarrow\infty}=\log_2\left(1+\frac{1}{\kappa}\right)
\end{equation}

Hence, we calculate $\overline{R_{HWI}}(N)|_{P\rightarrow\infty}-R_{HWI}^{DF}(N)|_{P\rightarrow\infty}$ and obtain
\begin{equation}
\overline{R_{HWI}}(N)|_{P\rightarrow\infty}-R_{HWI}^{DF}(N)|_{P\rightarrow\infty}
=\frac{1}{2}\log_2\left\{1+\frac{2\kappa+N+1}{(N+1)\kappa^2+2N\kappa} \right\}>0,\ \ \forall N>0
\end{equation}
from which we prove (\ref{P_inf_ACR_Compare}). Then, on the basis of (\ref{Utility Expression}) and (\ref{eq-DF Utility}), we consider $P\rightarrow\infty$ and prove (\ref{P_inf_Utility_Compare_IRS}) and (\ref{P_inf_Utility_Compare_DF}).
}

\section{Proof of Lemma 4}
In Appendix D, we will prove \textbf{Lemma 4} in Section V. 
{\color{black} It is noted that $\overline{R_{HWI}}(N)$ in (\ref{eq2-17}) is a monotonically increasing function with respect to $N>0$. Thus, its minimum lies on
\begin{equation}
\overline{R_{HWI}}\left(1\right)=\log_2\left\{1+\frac{\beta+\lambda+\mu_{SU}}
{\kappa\left(\beta+\lambda+\mu_{SU}\right)+\frac{\sigma_w^2}{P}}\right\}
\end{equation}
Besides, $R_{HWI}^{DF}(N)$ in (\ref{eq2-55}) is also a monotonically increasing function in relation to $N>0$, which is limited by $R_{HWI}^{DF}(N)|_{N\rightarrow\infty}=\frac{1}{2}\log_2\left(1+\frac{2}{\kappa}\right)$. Therefore, for the IRS to always outperform the DF relay in terms of the ACR for all $N>0$, the following relationship should hold:
\begin{equation}\label{Condition_of_IRS_to_always_outperform}
\overline{R_{HWI}}\left(1\right)>R_{HWI}^{DF}(N)|_{N\rightarrow\infty}
\end{equation}

From (\ref{Condition_of_IRS_to_always_outperform}), after a few manipulations, we obtain
\begin{equation}
\kappa>2\sigma_w^4 \left[P^2(\beta+\lambda+\mu_{SU})^2-2\sigma_w^2P(\beta+\lambda+\mu_{SU})\right]^{-1}
\end{equation}
and consequently prove \textbf{Lemma 4}.
}

\ifCLASSOPTIONcaptionsoff
  \newpage
\fi


%









\end{spacing}

\end{document}